\def\01{\{0,1\}}
\newcommand{\eps}{\varepsilon}
\newcommand{\ket}[1]{|#1\rangle}
\newcommand{\bra}[1]{\langle#1|}
\newcommand{\ketbra}[2]{|#1\rangle\langle#2|}
\newcommand{\braket}[2]{\langle#1|#2\rangle}
\newcommand{\Tr}{\mbox{\rm Tr}}
\newcommand{\Cc}{{\pazocal C}} %
\DeclareMathAlphabet{\pazocal}{OMS}{zplm}{m}{n}
\newcommand{\Exp}{\mathbb{E}}
\newcommand{\Rc}{\ensuremath{\mathsf{R}}}
\newcommand{\Qsq}{\ensuremath{\mathsf{QSQ}}}
\newcommand{\VC}{\ensuremath{\mathsf{VC}}}
\newcommand{\Sq}{\ensuremath{\mathsf{SQ}}}
\newcommand{\Qc}{\ensuremath{\mathsf{Q}}}
\newcommand{\Hi}{\ensuremath{\pazocal{H}}}
\newcommand{\Ch}{\ensuremath{\mathcal{H}}}
\newcommand{\A}{\ensuremath{\mathcal{A}}}
\newcommand{\R}{\ensuremath{\mathbb{R}}}
\newcommand{\C}{\ensuremath{\mathbb{C}}}
\newcommand{\id}{\ensuremath{\mathbb{I}}}
	\newcommand{\PARITY}{\mathsf{PARITY}}
	\newcommand{\size}{\mathsf{size}}
	\newcommand{\disc}{\mathsf{disc}}
	\newcommand{\uniform}{\mathcal{U}}
\DeclareMathOperator{\poly}{poly}
\newcommand{\Fe}{\ensuremath{\pazocal{F}}}
\newtheorem{theorem}{Theorem}[section]
\newtheorem{definition}[theorem]{Definition}
\newtheorem{lemma}[theorem]{Lemma}
\newtheorem{corollary}[theorem]{Corollary}
\newcommand{\pmset}[1]{\{-1,1\}^{#1}} %
\def\01{\{0,1\}}
\DeclareMathOperator{\Inf}{Inf}
\newcommand{\Qstat}{\mathsf{Qstat}}
\newcommand{\Stat}{\mathsf{Stat}}
\newcommand{\WSQDIM}{\mathsf{WeakSQDIM}}
\newcommand{\Prdim}{\mathsf{PRDIM}}
\newcommand{\supp}{\mathsf{supp}}
\newcommand{\citewithname}[2]{#1~#2}
\DeclareMathOperator{\sign}{sign}
\newenvironment{proof}
{\noindent {\bf Proof. }}
{{\hfill $\Box$}\\
	\smallskip}
\begin{document}
	
	\title{Quantum statistical query learning}
	\author{Srinivasan Arunachalam\thanks{IBM Research. {\tt Srinivasan.Arunachalam@ibm.com}}
\and
Alex B. Grilo\thanks{QuSoft and CWI, Amsterdam. {\tt alexg@cwi.com}
}
	\and
	Henry Yuen\thanks{University of Toronto. {\tt hyuen@cs.toronto.edu}}
	}
	\date{}
	\maketitle
	\begin{abstract}

 We propose a learning model called the \emph{quantum statistical learning ($\Qsq$)} model, which extends the $\Sq$ learning model introduced by Kearns~\cite{kearns:statistical} to the quantum setting. Our model can be also seen as a restriction of the quantum PAC learning model: here, the learner does not have direct access to quantum examples, but can only obtain estimates of measurement statistics on them. Theoretically, this model provides {a simple yet expressive setting} to explore the power of quantum examples in machine learning. From a practical perspective, since {simpler} operations are required, learning algorithms in the $\Qsq$ model are {more} feasible {for} %
 implementation on near-term quantum~devices. 

We prove a number of results about the $\Qsq$ learning model. We first show that parity functions, $O(\log n)$-juntas and polynomial-sized DNF formulas are efficiently learnable in the $\Qsq$ model, in contrast to the classical setting where these problems are provably hard. This implies that many of the advantages of quantum PAC learning can be realized even in the more restricted quantum $\Sq$ learning model.

{It is well-known that \emph{weak statistical query dimension}, denoted by $\WSQDIM(\Cc)$, characterizes the complexity of learning a concept class $\Cc$ in the classical $\Sq$ model. We show that $\log(\WSQDIM(\Cc))$ is a lower bound on the complexity of $\Qsq$ learning, and furthermore it is tight for certain concept classes $\Cc$.  Additionally, we show that this quantity provides strong lower bounds for the small-bias quantum communication model under product distributions.}

Finally, we introduce the notion of \emph{private} quantum  PAC learning, in which a quantum PAC learner is required to be \emph{differentially private}. We show that learnability in the $\Qsq$ model implies learnability in the quantum private PAC model. Additionally, we show that in the private PAC learning setting, the classical and quantum sample complexities are equal, up to constant factors.

 	\end{abstract}

	\section{Introduction}
The prospect of using quantum computers to perform machine learning has received much attention lately, given their potential to offer significant speedups for solving certain
problems of practical relevance. There has been a flurry of proposed quantum algorithms for performing computations that are ubiquitous in machine learning, ranging from convex optimization, matrix completion,  clustering, support vector machines~\cite{kerenidis2016quantum,brandao2017quantum,lloyd:clustering,rebentrost:SVM}. Due to the assumptions required by these quantum algorithms, the evidence for a quantum computational advantage in performing machine learning tasks is murky at best~\cite{tang:recommendation,chia:dequantizeall}.     
It is therefore an active area of research to obtain evidence (even conditional) for quantum advantage in machine learning.

Quantum learning theory has provided a theoretical framework to study the capabilities and limitations of quantum machine learning.  Here, the focus is not only on the computational complexity of learning algorithms, but also on \emph{information-theoretic} measures such as sample and query complexity. One of the first classical learning models that were generalized to the quantum setting was Valiant's Probably Approximately Correct (PAC) model of learning~\cite{valiant:paclearning}. In the classical PAC model of learning, the goal is to learn a collection of Boolean functions, which is often referred to as a \emph{concept class} $\Cc\subseteq \{c:\01^n\rightarrow \01\}$. The elements of a concept class are called \emph{concepts}. In the PAC model of learning, there is an unknown distribution $D:\01^n\rightarrow [0,1]$ and a learner is given \emph{labelled examples} $\{x_i,c^*(x_i)\}_i$ where $x_i$ is drawn from the distribution $D$ and $c^*\in \Cc$ is the unknown \emph{target concept}. The goal of a learner is the following: for every unknown $D$ and~$c^*$, use labelled examples to produce a hypothesis $h$ that satisfies $\Pr_{x\sim D}[h(x)=c^*(x)]\geq 2/3$. 
The \emph{quantum PAC model}, introduced by \citewithname{Bshouty and Jackson}{\cite{bshoutyandjackson:DNF}}, considers the extension of Valiant's PAC model where the learning algorithm is not given  labelled examples $\{x_i,c(x_i)\}_i$,
but instead is given copies of a \emph{quantum example} 
\[
    \ket{\psi_{c^*}} = \sum_{x \in \{0,1\}^n} \sqrt{D(x)} \, \ket{x,c^*(x)}.
\]
which is a \emph{superposition} of labeled examples. Observe that simply measuring $\ket{\psi_{c^*}}$ in the computational basis gives a classical labelled example. Quantum examples are well-motivated in quantum computing: they arise naturally in quantum query algorithms, and also have interesting complexity-theoretic applications~\cite{AharonovT07}.

In the distribution-independent PAC learning model, \citewithname{Arunachalam and de Wolf}{\cite{arunachalam:optimalpaclearning}} showed that the sample complexity of quantum and classical PAC learning is the same. However, in the uniform distribution learning model (i.e., when we fix $D$ to be the uniform distribution), quantum examples %
have been shown to be very powerful. In particular, given uniform quantum examples $\frac{1}{\sqrt{2^n}}\sum_x\ket{x,c^*(x)}$ a quantum learner {can efficiently} sample from the Fourier distribution  $\{\widehat{c^*}(S)^2\}_S$, {a tool that} has been used to provide even exponential advantage over the known classical algorithms~\cite{bshoutyandjackson:DNF,atici&servedio:qlearning,grilo:LWEeasy,arunachalam:qexactlearning}.

In this paper, we further investigate the power of quantum examples in learning, by defining a restricted quantum learning model and studying its capabilities and limitations. We call it the \emph{quantum statistical query ($\Qsq$)} model , which extends the well-studied (classical) \emph{statistical query} ($\Sq$) learning model introduced by Kearns~\cite{kearns:statistical}. In $\Sq$ learning, the learner constructs a hypothesis not by examining a sequence of labelled examples, but instead by adaptively querying an oracle to obtain \emph{estimates} of statistical properties of the labelled examples. Though this model is weaker than PAC learning, it is rich enough to capture many known learning algorithms~\cite{blum:intro2,Feldman2008,feldman:completechar}.
\paragraph{Quantum statistical query model.} In the $\Qsq$ model, the learner -- which is still a \emph{classical} randomized algorithm -- can query an oracle to obtain statistics of quantum examples to compute a hypothesis. {
Roughly speaking, these statistics correspond to the average value obtained if a quantum computer would repeatedly measure copies of quantum examples using a specified measurement $M$.
More concretely, in quantum computing measurements are defined by Hermitian matrices called observables,  
and the statistics obtained by the learner consist of an estimate of the expectation value $ \bra{\psi_{c^*}} M \ket{\psi_{c^*}}$ for a chosen observable~$M$.} When $M$ is diagonal, this reduces to the case of making classical $\Sq$ queries, and the power of $\Qsq$ appears when $M$ corresponds to measurements of $\ket{\psi_{c^*}}$ in a non-classical basis.%

We motivate the study of this model in several ways. Some concept classes appear to be learned more efficiently in the quantum PAC setting (at least in the \emph{distribution-dependent} setting); a natural question is whether these efficiency gains come from the ability of the quantum learning algorithm to \emph{directly} manipulate coherent superpositions of labeled data (i.e., quantum examples), or does the weaker quantum statistical query access suffice? Are there classical-quantum learning separations even in this weak statistical query model, where the learner can only access the data through measurement statistics of quantum examples? 

Another motivation comes from the consideration that $\Qsq$ learners are more practically feasible than general quantum PAC learners.  A general quantum PAC learning algorithm could perform complex entangling unitaries and measurements on many quantum examples simultaneously in order to extract joint statistics. However, this seems far beyond the capabilities of noisy, near-term quantum computers.
In the $\Qsq$ learning model, the learner can only obtain statistics about individual quantum examples. In a practical implementation of these quantum learning algorithms, this would only require measuring a single quantum example at a time. One could imagine a scenario where classical learning algorithms can query a cloud-based quantum computer to solve a learning task; the $\Qsq$ model would lend itself naturally to this situation.

\paragraph{Our contributions.} The first contribution of our paper is providing a definition for the quantum statistical query model.  We then prove a number of results regarding this model.

\begin{enumerate}
    \item We show that a query-efficient $\Qsq$ learner for a concept class $\Cc$ under a distribution $D$ implies a sample-efficient quantum PAC learner for $\Cc$ under the same distribution, and furthermore implies a sample-efficient \emph{noisy} quantum PAC learning of $\Cc$. This is exactly analogous to how classical $\Sq$ learning is a restriction of noisy PAC learning, which is {itself} a restriction of standard PAC learning. 
    
    \item We present three learning problems that can be solved efficiently in the $\Qsq$ model, but not in the classical $\Sq$ model. In particular, we show that it is possible to learn parity functions, juntas, and DNF formulas under the uniform distribution in polynomial time in the $\Qsq$ model; in contrast, the same problems are provably hard in the classical $\Sq$ model. {Notice that for juntas and DNFs, no efficient classical learning algorithm is currently known even in the setting where the learner is given the classical samples.}
    
    \item We show that 
while the  statistical query dimension characterizes the query complexity in the \Sq{} model, its logarithm is a tight lower-bound to the query complexity of $\Qsq$ learning. We also show a connection between  one-way communication complexity (under product distributions) with weak statistical query dimension. {In particular, this connection allows us to prove non-trivial lower bounds on the communication complexity (under product distributions) even for {\em inverse exponential bias} in computing the  function value.}

\item 	Our final contribution in this paper is to define the notion of \emph{privacy} in the quantum PAC learning model. We then lift the {fundamental} connection between  classical statistical query learning and private PAC learning to the quantum setting and show that learnability in the quantum $\Sq$ model implies  private quantum PAC learnability. Finally, we provide a  combinatorial characterization of the sample complexity of private quantum PAC learning and using this characterization we show that the sample complexities of private classical and quantum PAC learning are equal, up to constant factors.
\end{enumerate}

 \paragraph{Acknowledgements.} We thank Sasho Nikolov and Tanay Mehta for useful discussions. SA~did part of this work~at MIT and was supported by the MIT-IBM Watson AI Lab under the project \emph{Machine Learning in
Hilbert~space}. HY~was supported by NSERC Discovery Grant 2019-06636. Part of this work~was done while AG~and HY~were visiting the Simons Institute for the Theory of~Computing.

\paragraph{Organization.} 
In Section~\ref{sec:defnqsqmodel} we introduce the quantum statistical query model. In Section~\ref{sec:introlearneff} we present three concept classes that are efficiently learnable in the quantum statistical query model. We follow {by showing a lower bound} to the query complexity in the $\Qsq$ model and its relations to communication complexity in Section~\ref{sec:introlowerbounds}. Finally, in Section~\ref{sec:introdp} we present connections between the $\Qsq$ model and {quantum differential~privacy}.

	\section{Preliminaries}
	\label{sec:introprelim}
	We let $[n]=\{1,\ldots,n\}$. For $s\in \01^n$, define $\supp(s)=\{i\in [n]:s_i=1\}$. For $S\subseteq [n]$, denote $S^c=[n]\backslash S$ be the complement of $S$.
	
\paragraph{Quantum computing.}
We briefly review the basic concepts in quantum computing. We define $\ket{0}=\left(\begin{array}{c}1 \\ 0 \end{array}\right)$ and $\ket{1}=\left(\begin{array}{c} 0 \\ 1 \end{array}\right)$ as the canonical basis for $\mathbb{C}^2$. A qubit $\ket{\psi}$ is a unit vector in $\mathbb{C}^2$, i.e., $\alpha\ket{0}+\beta\ket{1}$ for $\alpha,\beta\in \mathbb{C}$ that satisfy $|\alpha|^2+|\beta|^2=1$. Multi-qubit quantum states are obtained   by taking tensor products of single-qubit states: an arbitrary $n$-qubit quantum state $\ket{\psi}\in \mathbb{C}^{2^n}$ is a unit vector in $\mathbb{C}^{2^n}$ and can be expressed as $\ket{\psi}=\sum_{x\in \01^n}\alpha_x \ket{x}$ where $\alpha_x\in \mathbb{C}$ and $\sum_x |\alpha_x|^2=1$. {We denote by $\bra{\psi}$ as the conjugate transpose of the quantum state~$\ket{\psi}$}. On a quantum computer one is allowed to arbitrary quantum gates (or operations) that correspond to unitary matrices. One gate we use often is the Hadamard gate, defined as $
\mathsf{H} = 
\frac{1}{\sqrt{2}}\begin{pmatrix}
1 & 1  \\
1 & -1
\end{pmatrix}.
$
 An \emph{observable} $M$ is a Hermitian matrix, which encodes a measurement in quantum mechanics. The average measurement outcome of a state $\ket{\psi}$ using the observable $O$ is given by the expectation value $\bra{\psi}O\ket{\psi}$.

\paragraph{Fourier analysis.}	We now introduce the basics of Fourier analysis on the Boolean cube. For $S\in~\01^n$, we define the \emph{character function} $\chi_S:\01^n\rightarrow \pmset{}$ as $\chi_S(x)=(-1)^{S\cdot x}$ where $S\cdot x=\sum_i S_i\cdot x_i \pmod 2$. For $f:\01^n\rightarrow \pmset{}$, the Fourier coefficients of $f$ are
	$$
	\widehat{f}(S)=\Exp_{x\in \01^n}[ f(x)\cdot \chi_S(x)] \quad \text{ for every } S\in \01^n,
	$$
	where the expectation is taken with respect to the uniform distribution over $\01^n$. Every function  $f:\01^n\rightarrow \R$ can be written uniquely as 
	$
	f(x)=\sum_{S\in \01^n}\widehat{f}(S)\chi_S(x)$. 
	Parseval's identity states that  $\sum_S \widehat{f}(S)^2=\Exp[f(x)^2]=~1$. Hence, $\{\widehat{f}(S)^2\}_{S\in \01^n}$ forms a probability distribution. For every $i\in [n]$, we define the $i$th influence as
	$$
	\Inf_i(f)=\sum_{\substack{S\in \01^n: \\ S_i=1}}\widehat{f}(S)^2.
	$$

\subsection{PAC learning}	
\label{sec:defnoflearning}
Valiant \cite{valiant:paclearning} introduced the Probably Approximately Correct (PAC) model of learning, which gives a formalization of what ``learning a function'' means. In this learning model, a \emph{concept class} $\Cc$ is a collection of Boolean functions $\Cc\subseteq \{c:\01^n\rightarrow \01\}$. The functions inside $\Cc$ are referred to as \emph{concepts}. Let $D:\01^n\rightarrow [0,1]$ be an \emph{unknown distribution} over the Boolean cube. In the PAC model, a learner $\A$ is given many \emph{labelled examples} $(x,c(x))$ where $x$ is drawn from the distribution $D$ and $c\in \Cc$ is the \emph{unknown} target concept. The goal of an $(\varepsilon,\delta)$-learner is the following: with probability at least $1-\delta$ (probability taken according to internal randomness of $\A$ and $D$), output  a hypothesis $h:\01^n\rightarrow \01$ that satisfies $\Pr_{x\sim D}[h(x)=c(x)]\geq 1-\varepsilon$. The $(\varepsilon,\delta)$-sample complexity of a learning algorithm $\A$ is the maximal number of labelled examples used, maximized over all $c\in \Cc$ and distributions $D:\01^n\rightarrow [0,1]$. The $(\varepsilon,\delta)$-sample complexity of learning $\Cc$ is the minimal sample complexity  over all $(\varepsilon,\delta)$-learners for $\Cc$.

	   We say $\A$ is a \emph{uniform-$(\varepsilon,\delta)$ learner} for a concept class $\Cc$ if the distribution $D$ is \emph{fixed} to be the uniform distribution over $\01^n$ and $\A$ learns $\Cc$ under the uniform distribution.

\subsubsection{Quantum PAC learning.}		
The quantum PAC model of learning was introduced by \cite{bshoutyandjackson:DNF}. In this model a quantum learning algorithm has access to a quantum computer and \emph{quantum  examples} $\sum_x\sqrt{D(x)}\ket{x,c(x)}$, and the goal is still output a {\em classical} hypothesis $h$ with the same requirements as in the classical setting. For every $\Cc$, the $(\varepsilon,\delta)$-quantum PAC complexities are defined as the quantum analogues to the classical complexity measures. For more on these learning models, we refer the reader to~\cite{arunachalam:quantumsurveylearning} and the references therein.

\paragraph{Noisy quantum examples.}
Following the work of Grilo et al.~\cite{grilo:LWEeasy}, we define noisy quantum PAC learning. Here, a learner is provided with copies of a {\em noisy quantum example} for a concept $c \in \Cc$ and distribution $D$ as a superposition of {\em noisy classical examples}.\footnote{In the classical setting, a noisy classical PAC learner obtains many $(x,c(x)+b_x)$ where $x$ is sampled from $D$ and $b_x$ is an independent random variable which equals $0$ with probability $1 - \eta$ and $1$ otherwise and using these noisy examples a learner needs to learn $c$.} Understanding the quantum and classical learnability of functions in the noisy setting is motivated by the connection to important problems in cryptography such as learning parity with noise~\cite{Pietrzak12} and learning with errors problem~\cite{regev:lwe}.  More concretely, a $\eta$-noisy quantum example for a concept $c$ is given~by
\begin{align}
\label{eq:noisy-samples-definition}
    \ket{\widehat{\psi}_c}=\sum_{x \in \{0,1\}^n} \sqrt{D(x)}\ket{x,c(x)\oplus b_x},
\end{align}
where each $b_x$ is an i.i.d.\ variable which equals $0$ with probability $1 - \eta$ and $1$ otherwise. Here again the goal of a learner is to learn a concept class $\Cc$ under all distributions $D$. The complexity of such learners is defined exactly as we defined it for quantum PAC learning, except that  we also allow the sample complexity of an $\eta$-noisy PAC learner to depend on the factor $1/(1-2\eta)$.\footnote{Note that when $\eta=1/2$, a learner is obtaining uniformly random bits of information in which case we cannot hope to learn $c$.}

\section{Quantum statistical query learning}	
\label{sec:defnqsqmodel}

	In this section, we introduce the model of quantum statistical query learning ($\Qsq$). We start by briefly describing the \emph{classical} $\Sq$ learning. 
  Let $\Cc\subseteq \{c:\01^n\rightarrow \pmset{} \}$ be a concept class. The goal of a statistical learning algorithm is to learn an unknown $c^*\in \Cc$ under an unknown distribution $D:\01^n\rightarrow [0,1]$. A (classical) $\Sq$ learning algorithm has access to a \emph{statistical query oracle} $\Stat$ which takes as input a \emph{tolerance} parameter $\tau \geq 0$ and a function $\phi:\01^n\times \pmset{}\rightarrow \pmset{}$ and returns a number $\alpha$ such that 
	\[  
	\Big |\alpha - \Exp_{x\sim D}[\phi(x,c^*(x))] \Big | \leq \tau\;.
	\]
	The $\Sq$ learning algorithm adaptively chooses a sequence $\{ (\phi_i,\tau_i) \}$, and based on the responses of $\{\Stat(\phi_i,\tau_i)\}_i$,  it outputs a hypothesis $h:\01^n\rightarrow \01$. We say that an $\Sq$ learning algorithm~$\A$ $\varepsilon$-learns $\Cc$ with query complexity $Q$ and tolerance $\tau$  if, for every $c^*\in \Cc$ and distribution~$D$,~$\A$ makes $Q$ classical $\Stat$ queries with tolerance at least $\tau$, and outputs a hypothesis $h$ that is $1-\eps$ close to~$c^*$ under $D$, i.e., $\Pr_{x\sim D}[h(x)=c^*(x)]\geq 1-\varepsilon$.\footnote{Note that in the $\Sq$ model, there is no ``$\delta$"-parameter associated to a learner, i.e., we require a $\Sq$ learner to always output a hypothesis $h$ that is $\varepsilon$-close to $c$ under $D$.}

    We extend this learning model to allow the algorithm to make {\em quantum statistical~queries}.
    \begin{definition}
    Let $\Cc\subseteq \{c:\01^n\rightarrow \01\}$ be a concept class and $D :\01^n\rightarrow [0,1]$ be a distribution. A \emph{quantum statistical query oracle} $\Qstat(M,\tau)$ for some $c^* \in \Cc$ receives as inputs  a tolerance parameter $\tau \geq 0$ and an \emph{observable} $M\in (\C^2)^{\otimes n+1} \times (\C^2)^{\otimes n+1}$ satisfying $\|M\|\leq 1$, and outputs a number~$\alpha$~satisfying
	$$
	\Big|\alpha- \langle \psi_{c^*}| M | \psi_{c^*}\rangle\Big|\leq \tau,
	$$
	where $\ket{\psi_{c^*}}=\sum_{x\in \01^n}\sqrt{D(x)}\ket{x,c^*(x)}$.
    \end{definition}
    	Observe that $\Qstat$ generalizes the classical $\Stat(\phi,\tau)$: if we take the diagonal matrix
	$$
	M=\sum_{z\in \01^n} \phi(z,c(z))\ketbra{z,c(z)}{z,c(z)},
	$$
	then $\Qstat(M,\tau)$ outputs a number $\alpha \in \R$ that is $\tau$-close to $\Exp_{x\sim D} [\phi(x,c(x))]$, as in the classical case. Allowing $M$ to be an arbitrary quantum observable lets the $\Qsq$ learning algorithm to acquire a broader range of statistics from the coherent superposition of labeled examples.

    \begin{definition}
        Let $\Cc$ be a concept class and $D :\01^n\rightarrow [0,1]$ be a distribution.  We stay that~$\Cc$ can be $\eps$-learned in the quantum statistical query model with $Q$ queries, if there is an algorithm~$\A$ such that for every $c^* \in \Cc$, $\A$ makes at most $Q$ $\Qstat$ queries and outputs a hypothesis~$h$~satisfying 
$\Pr_{x\sim D}[h(x)\neq c^*(x)]\leq \varepsilon$.
\end{definition}

 We justify this model as follows. In the classical case, one can think of the input $\phi$ to the $\Stat$ oracle as a specification of a \emph{statistic} about the distribution of examples $(x,c^*(x))$, and the output of the $\Stat$ oracle is an  {\em estimation} of $\phi$: one can imagine that the oracle receives i.i.d.~labeled examples $(x,c^*(x))$ and empirically computes an estimate of $\phi$, which is then forwarded to the learning algorithm.	In the quantum setting, one can imagine the analogous situation where the oracle receives copies of the quantum example state $\ket{\psi_{c^*}}$, and performs a measurement  indicated by the {\em observable} $M$ on each copy and outputs an estimate of $\langle \psi_{c^*}| M | \psi_{c^*}\rangle$.  We emphasize that the learning algorithm is still a \emph{classical} randomized algorithm and only receives statistical estimates of measurements on quantum~examples.
   Similar to the PAC setting, we are also interested in the sample and time complexity of learning concept classes in the quantum statistical model.
    
    \begin{definition}
    Let $\Cc$ be a concept class and $D :\01^n\rightarrow [0,1]$ be a distribution. 
    We define $\Qsq_{\eps}(c,D)$ as the minimal number of $\Qstat$ queries that a learner $\A$ needs to make to $\eps$-learn~$c$. We  define the \emph{statistical query complexity of $\Cc$} as
	$$
	\Qsq_{\eps}(\Cc)=\max_{c\in \Cc}\max_{D} \Qsq_{\eps}(c,D).
	$$
We say that $\Cc$ can be $\varepsilon$-learned in  polynomial time (polynomial with respect to the precision $1/\tau$, the error parameter $1/\varepsilon$ and the description size of $\Cc$) under the distribution $D$ in the $\Qsq$ model if there is a polynomial time  algorithm $\A$ that $\eps$-learns $\Cc$ under $D$. We say that $\Cc$ can be $\varepsilon$-learned in  polynomial time if the learning algorithm $\mathcal{A}$ works for every distribution $D:\01^n\rightarrow [0,1]$.\footnote{{Here, by polynomial-time algorithm, we mean the number of gates used in the quantum algorithm is polynomial in the relevant parameters.}}
    \end{definition}

When the bias is not explicitly mentioned, $\Qsq(\Cc)$
denotes the statistical query complexity of learning $\Cc$ with bias $\varepsilon=1/3$. 

Like in the classical case, our first observation is that if there exists an efficient quantum statistical learning algorithm using $\Qstat$ queries then there also exists a  quantum learning algorithm in the standard and noisy PAC setting.

	\begin{theorem}
		\label{claim:kearnssqimpliesnoisy}
		Let $\Cc$ be a concept class. Let $\tau,\delta>0$ and $\eta< \max\{1/2,2\tau^2\}$. Suppose there exists an $\varepsilon$-$\Qsq$ algorithm that makes $Q$ $\Qstat$ queries with tolerance at least~$\tau$. Then,
		\begin{enumerate}
			\item 
			there exists a $(\varepsilon,\delta)$-quantum PAC learner for $\Cc$ that uses $O(\tau^{-2} Q\log (Q/\delta))$ quantum examples.
			\item there exists a $\eta$-noisy $(\varepsilon,\delta)$-quantum PAC learner for $\Cc$ that uses $O((\tau-\sqrt{\eta})^{-2} Q\log (Q/\delta))$ many $\eta$-noisy quantum examples.
		\end{enumerate} 
	\end{theorem}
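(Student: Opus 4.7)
The plan is to simulate each $\Qstat$ query by drawing fresh (possibly noisy) quantum examples, measuring the indicated observable on each copy, and returning the empirical mean; Hoeffding's inequality controls the per-query estimation error and a union bound over the $Q$ adaptive queries controls the overall failure probability. The only substantive difference between the two parts is a systematic bias that appears in the noisy case, which I would bound via a fidelity computation.

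For part 1, fix an $\varepsilon$-$\Qsq$ learner $\A$ with adaptive queries $(M_i,\tau_i)$ and $\tau_i\geq \tau$. I would build a quantum PAC learner $\A'$ that, for each query $M_i$, consumes $N = O(\tau^{-2}\log(Q/\delta))$ fresh copies of $\ket{\psi_{c^*}}$, measures $M_i$ on each copy independently, and returns the empirical mean $\bar\alpha_i$ in place of $\Qstat(M_i,\tau_i)$. Since $\|M_i\|\leq 1$, each measurement outcome lies in $[-1,1]$, so Hoeffding yields $|\bar\alpha_i - \langle \psi_{c^*}|M_i|\psi_{c^*}\rangle|\leq \tau$ with probability $\geq 1 - \delta/Q$. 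A union bound over the $Q$ queries guarantees that with probability $\geq 1-\delta$ every simulated response is a valid $\Qstat$ answer, and on that event $\A'$'s output is distributed identically to $\A$'s and is therefore $\varepsilon$-good; the total sample cost is $NQ = O(\tau^{-2}Q\log(Q/\delta))$.

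For part 2, each copy is a noisy example $\ket{\widehat{\psi}_{c^*}}$ with independent per-copy noise $\{b_x\}$, so measuring $M_i$ on $N$ noisy copies and averaging concentrates to $\Tr(M_i\rho)$, where $\rho := \Exp_b\ket{\widehat{\psi}_{c^*}}\bra{\widehat{\psi}_{c^*}}$, rather than $\langle \psi_{c^*}|M_i|\psi_{c^*}\rangle$. I would first bound the systematic bias by computing the fidelity
$$
\langle \psi_{c^*}|\rho|\psi_{c^*}\rangle = \Exp_b\bigl|\langle \widehat{\psi}_{c^*}|\psi_{c^*}\rangle\bigr|^2 = \Exp_b\Bigl(1-\sum_{x:b_x=1}D(x)\Bigr)^{\!2} \geq (1-\eta)^2,
$$
where the last step uses $\Exp_b[\sum_{x:b_x=1}D(x)]=\eta$ together with Jensen's inequality. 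Since $\ket{\psi_{c^*}}\bra{\psi_{c^*}}$ is pure, the Fuchs--van de Graaf inequality gives $\|\rho-\ket{\psi_{c^*}}\bra{\psi_{c^*}}\|_1 \leq 2\sqrt{1-(1-\eta)^2}\leq 2\sqrt{2\eta}$, and H\"older's inequality with $\|M_i\|\leq 1$ bounds the bias by $2\sqrt{2\eta}$. Taking $N=O((\tau-c\sqrt{\eta})^{-2}\log(Q/\delta))$ for a small constant $c$ then makes the Hoeffding deviation at most $\tau-2\sqrt{2\eta}$ with probability $\geq 1-\delta/Q$, and adding the bias gives total error at most $\tau$. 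The hypothesis on $\eta$ (roughly $\eta \lesssim \tau^2$) ensures $\tau - 2\sqrt{2\eta} = \Omega(\tau)$ so $N$ is well defined, and a union bound over queries completes the argument. The main obstacle is the fidelity calculation producing the $O(\sqrt{\eta})$ bias; once that is in hand the remainder is a routine Hoeffding-plus-union-bound pipeline shared between both parts.
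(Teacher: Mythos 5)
Your proposal is correct and follows essentially the same route as the paper: simulate each $\Qstat$ query by averaging measurement outcomes over fresh copies, apply Hoeffding/Chernoff per query and a union bound over the $Q$ (adaptive) queries, and for the noisy case bound the systematic bias by an $O(\sqrt{\eta})$ trace-distance argument. The only organizational difference is that you pass to the averaged mixed state $\rho=\Exp_b\kb{\widehat{\psi}_{c^*}}$ and invoke Fuchs--van de Graaf, whereas the paper bounds $\Exp_j\|\kb{\widehat{\psi}_j}-\kb{\psi_{c^*}}\|$ sample-by-sample via the pure-state overlap and then takes expectation; these are equivalent and both give bias $O(\sqrt{\eta})$ (your explicit $2\sqrt{2\eta}$ constant is looser than the paper's claimed $\sqrt{\eta}$, but the paper's own derivation of that tighter constant is itself not quite airtight, so this is a wash at the level of the stated $O(\cdot)$ bound).
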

	
	\begin{proof}
    Suppose there exists a quantum statistical algorithm that makes the $Q$ $\Qstat$ queries $\{(M_1,\tau),\ldots,(M_Q,\tau)\}$ and the output of $\Qstat$ queries are $\{\alpha_1,\ldots,\alpha_Q\}$, where
		$$
			\Big\vert \alpha_i -  \langle \psi_c \vert M_i \vert \psi_c \rangle \Big\vert \leq \tau,
		$$
		and $\ket{\psi_c}=\sum_x\sqrt{D(x)}\ket{x,c(x)}$.\footnote{We remark that we consider non-adaptive queries for simplicity and that our argument works even if the $\Qsq$ learner makes $Q$ \emph{adaptive} $\Qstat$ queries.} 		We now prove the first statement in the theorem.  Consider a quantum PAC learner that does the following: for every $i\in [Q]$, a quantum learner obtains $T=\log (Q/\delta)/{\tau}^2$ copies of $\ket{\psi_c}$ and measures each of them according to $M_i$ with outcomes $\{a^i_1,\ldots,a^i_{T}\}\in [-1,1]^T$. The quantum PAC learner simply passes $\beta_i=\frac{1}{T}\sum_{j=1}^{T}a^i_j$ to the $\Qsq$ learner. By a Chernoff bound, observe that
		$$
		\Pr\Big[\Big|\beta_i - \langle \psi_c \vert M_i \vert \psi_c \rangle\Big|\leq \tau\Big]\geq 1-\frac{\delta}{Q} \qquad \text{ for every } i\in [Q],
		$$
		where the probability is taken over the randomness in measurement. By the union bound over all~$Q$, with probability at least $1-\delta$, the $\Qsq$ learner obtains $\{\beta_1,\ldots,\beta_Q\}$ which are the response~$Q$ $\Qstat$ queries up to precision $\tau$. Hence the $\Qsq$ learner (and the quantum PAC learner) outputs a hypothesis $h$ that satisfies $\Pr_{x\sim D}[h(x)=c(x)]\geq 2/3$.  The total number of quantum examples used by quantum PAC learner is $ Q\cdot \log (Q/\delta)/{\tau}^2$.
		
		We now prove the second statement in the theorem. Consider the case where an $\eta$-noisy quantum PAC learner is given $T=(\tau-\sqrt{\eta/2})^{-2}\log(Q/\delta)$  noisy quantum examples $\bigotimes_{j \in [T]} \ket{\widehat{\psi}_j}$ for each query $i \in [Q]$, where each $\ket{\widehat{\psi}_j}$ is a fresh noisy example as described in \Cref{eq:noisy-samples-definition}. 
		The noisy PAC learner behaves like a standard PAC learner, for every $i\in [Q]$, the learner measures each of the $T$ copies according to $M_i$, obtains $\{\alpha^i_1,\ldots,\alpha^i_T\}$ and passes $\beta'_i=\frac{1}{T}\sum_{j=1}^Ta^i_j$ to the $\Qsq$ learner. Before we analyze the difference between $\beta'_i$ and $\langle \widehat{\psi}_c | M_i| \widehat{\psi}_c\rangle$, we first observe that for every $M_i$ satisfying $\|M_i\|\leq 1$ and for every $j\in [T]$, we~have
		\begin{align*}
		   \Exp \Big[ \big|\langle \psi_c | M_i|\psi_c  \rangle-\langle \widehat{\psi}_j | M_i| \widehat{\psi}_j\rangle \big|\Big]\leq \Exp\Big[\big\|\ketbra{\widehat{\psi}_j}{\widehat{\psi}_j}-\ketbra{\psi_c}{\psi_c}\big\|\Big]=\Exp\Big[ \sqrt{1-\langle \widehat{\psi}_j| \psi_c\rangle}\Big]=\sqrt{1-\sqrt{1-\eta}}\leq \sqrt{\eta},
		\end{align*}
		where we use the definition of trace distance in the first inequality and $\sqrt{1-x}\leq x/2$ for $x\leq 1$ in the last inequality. Hence for every $i,j$, we have
		\begin{align} \label{eq:def-mu}
		    \mu := 		\Exp \Big[ \langle \widehat{\psi}_j | M_i| \widehat{\psi}_j\rangle \Big]
		 \in \left[\bra{\psi_c}M_i\ket{\psi_c} - \sqrt{\eta}, \bra{\psi_c}M_i\ket{\psi_c} + \sqrt{\eta}\right],
		\end{align} 
		and such value is \emph{independent} of $j$.  Using a Chernoff bound over the $T$ noisy quantum examples, we have
		\begin{align}
    \label{eq:howfarpsietafrompsi}
		\Pr\Big[\Big|\beta'_i - \mu \Big|\leq \tau-\sqrt{\eta}\Big]\geq 1-\frac{\delta}{Q},
		\end{align}
		where the probability is taken over the randomness in measurement. In particular, with probability $1 - \frac{\delta}{Q}$ we have that
		\begin{align}
\Big|\beta'_i - \langle \psi_c \vert M_i \vert \psi_c \rangle\Big| \leq \Big|\beta'_i- \mu \Big|+\Big| \mu - \langle \psi_c \vert M_i \vert \psi_c \rangle\Big|\leq \tau-\sqrt{\eta}+\sqrt{\eta}=\tau,
		\end{align}
		where the first inequality used the triangle inequality and the last inequality used \Cref{eq:def-mu,eq:howfarpsietafrompsi}. We now use the same argument as the PAC setting to argue that with probability at least $1-\delta$, a $\Qsq$ learner which obtains $\{\beta'_1,\ldots,\beta'_Q\}$ will output $h$ that satisfies $\Pr_{x\sim D} [h(x)=c(x)]\geq 2/3$, hence proving the theorem statement. The total number of quantum examples used by quantum PAC learner is  $O((\tau-\sqrt{\eta})^{-2} Q\log (Q/\delta))$.
	\end{proof}

	\section{Learning concept classes quantum efficiently} 
	\label{sec:introlearneff}
In this section, we show how to quantum-efficiently learn concept classes in the $\Qsq$ model that are provably hard to learn in the classical $\Sq$ model.
Our key technical tool that will lead to such learning algorithms is a procedure to estimate the Fourier mass of a concept $c$ on a subset of $\01^n$ using a single $\Qstat$.

\begin{lemma}
\label{lem:stattoestimateinf}
Let $f:\01^n\rightarrow \pmset{}$ and $\ket{\psi_f}=\frac{1}{\sqrt{2^n}}\sum_x\ket{x,f(x)}$.
 There is a procedure that on input  $T \subseteq \01^n$,  outputs a $\tau$-estimate of $\sum_{S\in T}\widehat{f}(S)^2$ using one $\Qstat$ query with tolerance $\tau$.
 \end{lemma}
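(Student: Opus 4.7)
The plan is to reduce the quantity $\sum_{S\in T}\widehat{f}(S)^2$ to the expectation of a single rank-projector observable with respect to $\ket{\psi_f}$, exploiting the standard fact that Hadamard-transforming $\ket{\psi_f}$ produces a state whose amplitudes on the $\ket{\cdot,1}$ branch encode the Fourier spectrum of $f$.

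First I would establish the Fourier identity
\[
H^{\otimes (n+1)}\ket{\psi_f} = \frac{1}{\sqrt{2}}\ket{0^n}\ket{0} + \frac{1}{\sqrt{2}}\sum_{S\in\{0,1\}^n}\widehat{f}(S)\ket{S}\ket{1}.
\]
This follows by expanding $H^{\otimes n}\ket{x}=2^{-n/2}\sum_S(-1)^{S\cdot x}\ket{S}$ and $H\ket{b(x)}=2^{-1/2}(\ket{0}+(-1)^{b(x)}\ket{1})$, where $b(x)=(1-f(x))/2\in\{0,1\}$ so that $(-1)^{b(x)}=f(x)$; regrouping the resulting double sum over $x$ and using $\frac{1}{2^n}\sum_x(-1)^{S\cdot x}=\delta_{S,0^n}$ together with $\widehat{f}(S)=\frac{1}{2^n}\sum_x f(x)(-1)^{S\cdot x}$ yields the claim.

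Next I would define the projector
\[
\Pi_T \;=\; \sum_{S\in T}\ketbra{S}{S}\otimes \ketbra{1}{1},\qquad M \;=\; H^{\otimes(n+1)}\,\Pi_T\,H^{\otimes(n+1)}.
\]
Because $\Pi_T$ is a projector and $H^{\otimes(n+1)}$ is unitary, $M$ is itself a projector, and in particular $\|M\|\leq 1$ so it is a valid input to $\Qstat$. Substituting the expression from the previous step into $\bra{\psi_f}M\ket{\psi_f}$ and using orthonormality of the $\ket{S}\ket{b}$ basis gives
\[
\bra{\psi_f}M\ket{\psi_f}\;=\;\frac{1}{2}\sum_{S\in T}\widehat{f}(S)^2.
\]

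Finally, the procedure issues the single query $\alpha\leftarrow \Qstat(M,\tau/2)$ and outputs $2\alpha$; by the oracle guarantee $|\alpha-\tfrac{1}{2}\sum_{S\in T}\widehat{f}(S)^2|\leq \tau/2$, so the output is within $\tau$ of $\sum_{S\in T}\widehat{f}(S)^2$. There is no real technical obstacle here; the only points requiring care are the encoding convention between $f(x)\in\{-1,1\}$ and the label qubit, and absorbing the factor of $\tfrac{1}{2}$ (which appears because $\Qstat$ restricts to observables of norm at most one) into the requested tolerance.
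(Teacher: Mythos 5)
Your proof follows the same strategy as the paper: apply $H^{\otimes(n+1)}$ to $\ket{\psi_f}$, project onto the $\ket{\cdot}\ket{1}$ branch where the Fourier amplitudes live, and use the projector onto $\spann\{\ket{S}:S\in T\}$ to extract the Fourier mass; the observable $M=H^{\otimes(n+1)}\Pi_T H^{\otimes(n+1)}$ you construct coincides with the paper's $M'$. In fact your version is more careful than the paper's: you correctly track that the unnormalized $\ket{\cdot}\ket{1}$ branch carries amplitude $1/\sqrt{2}$, so $\bra{\psi_f}M\ket{\psi_f}=\tfrac{1}{2}\sum_{S\in T}\widehat{f}(S)^2$ rather than $\sum_{S\in T}\widehat{f}(S)^2$ as the paper asserts (the paper silently conflates the post-selected, renormalized state with the raw projected component), and you fix this by querying at tolerance $\tau/2$ and doubling the response. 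This is a genuine, if minor, correction to the paper's argument and does not affect the lemma as stated.
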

	
\begin{proof}
Let $M=\sum_{S\in T}\ketbra{S}{S}$. The observable used in $\Qstat$ is then
		$$
    M'=\mathsf{H}^{\otimes (n+1)} \cdot \Big( \id^{\otimes n}\otimes \ketbra{1}{1} \Big) \cdot M\cdot \Big( \id^{\otimes n}\otimes \ketbra{1}{1}\Big) \cdot \mathsf{H}^{\otimes (n+1)}.
		$$ 
    Operationally, $M'$ corresponds to first apply the
    Fourier transform on $\ket{\psi_f}$, {post-selecting on the last qubit being $1$} and finally applying $M$ to the first $n$ qubits.
 In order to see the action of $M'$ on~$\ket{\psi_f}$, first observe that $\mathsf{H}^{\otimes (n+1)}\ket{\psi_f}$ yields		
		$$
		\frac{1}{\sqrt{2^{n}}}\sum_x\ket{x,f(x)}\rightarrow \frac{1}{2^n}\sum_{x,y}\sum_{b\in \01} (-1)^{x\cdot y+b\cdot f(x)}\ket{y,b}.
		$$
        Conditioned on the $(n+1)$-th qubit being $1$, we have that the resulting quantum state is $\ket{\psi'_f}=\sum_Q \widehat{f}(Q)\ket{Q}$. Applying $M$ to the resulting state gives us
		$$
		\langle\psi'_f\vert M\vert \psi'_f\rangle=\sum_{\substack{R,Q\in \01^n\\S \in  T}}\braket{S}{Q}\braket{S}{R}\widehat{f}(R)\widehat{f}(Q)=\sum_{S\in T}\widehat{f}(S)^2.
		$$
Therefore, one $\Qstat(M',\tau)$ query results in a $\tau$-approximation of $\sum_{S\in T}\widehat{f}(S)^2$. 
\end{proof}	
	
We now use this lemma to show efficient $\Qsq$ learners for parities, $k$-juntas and DNFs.

\subsection{Learning Parities}

We start by showing a  polynomial time $\Qsq$ learner for parities. Classically, Kearns \cite{kearns:statistical} showed that $2^{\Omega(n)}$ $\Stat$ queries (with tolerance at least $2^{-\Omega(n)})$ are necessary to weakly learn parities under the uniform distribution. 

\begin{lemma}
\label{lem:parities}
		\label{lem:quantumsqparity}
		The concept class $\Cc=\{\chi_s:\01^n\rightarrow \01 \}_s$ of parities can be exactly learned 
		with $O(n)$ $\Qstat$ queries with tolerance at least $1/3$  under the uniform distribution. 
\end{lemma}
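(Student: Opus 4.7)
The plan is to use \cref{lem:stattoestimateinf} to recover the unknown parity index $s \in \{0,1\}^n$ one bit at a time. For the target concept $c^* = \chi_s$, viewed through the Fourier lens as the $\pm 1$-valued function $x \mapsto (-1)^{s \cdot x}$, orthonormality of the characters gives the trivial Fourier expansion $\widehat{\chi_s}(S) = 1$ if $S = s$ and $0$ otherwise. In particular, for every coordinate $i \in [n]$,
\[
\Inf_i(\chi_s) \;=\; \sum_{S : S_i = 1}\widehat{\chi_s}(S)^2 \;=\;
\begin{cases} 1 & \text{if } s_i = 1,\\ 0 & \text{if } s_i = 0. \end{cases}
\]

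Given this observation the algorithm is immediate. For each $i \in [n]$, set $T_i = \{S \in \{0,1\}^n : S_i = 1\}$ and invoke \cref{lem:stattoestimateinf} with $T_i$ and tolerance $\tau = 1/3$. Each call returns some $\alpha_i$ with $|\alpha_i - \Inf_i(\chi_s)| \leq 1/3$; since $\Inf_i(\chi_s) \in \{0,1\}$ and $1/3 < 1/2$, thresholding $\alpha_i$ at $1/2$ recovers $s_i$ exactly. After $n$ such calls, the learner has determined the entire string $s$ and outputs the exact hypothesis $h = \chi_s$. This uses $n = O(n)$ $\Qstat$ queries of tolerance $1/3$, matching the claim.

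The only minor subtlety is consistency between the $\{0,1\}$ encoding of the label in the quantum example $\ket{\psi_{c^*}} = \frac{1}{\sqrt{2^n}}\sum_x \ket{x,\, s \cdot x \bmod 2}$ and the $\pm 1$ Fourier viewpoint used in \cref{lem:stattoestimateinf}. This is already handled inside that lemma: the Hadamard on the label qubit converts the bit $s \cdot x$ into the phase $(-1)^{s \cdot x}$, and the post-selection step then isolates the Fourier amplitudes of $\chi_s$. So there is no real obstacle; the proof is a one-line Fourier identification followed by a direct invocation of \cref{lem:stattoestimateinf}, and the main content is simply choosing the right family of sets $T_i$.
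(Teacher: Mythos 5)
Your argument is the same as the paper's: identify $\Inf_i(\chi_s)\in\{0,1\}$ according to whether $s_i=1$, call Lemma~\ref{lem:stattoestimateinf} with $T_i=\{S:S_i=1\}$ at tolerance $1/3$ for each $i$, and threshold to recover $s$ bit by bit. The only thing the paper adds beyond this is a remark (not needed for the query-complexity claim) that each observable $M$ can be implemented with $\poly(n)$ gates.
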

	\begin{proof}
	   Let $c:\01^n\rightarrow \pmset{}$ be a parity function defined as $c(x)=(-1)^{s\cdot x}$ for an unknown $s\in \01^n$. Then it is not hard to see that $\Inf_i(c)=1$ for all $i\in \supp(s)$ and $\Inf_i(c)=0$ otherwise. Using this observation, the $O(n)$ query quantum algorithm is straightforward: for every $i\in [n]$, we use Lemma~\ref{lem:stattoestimateinf} to estimate the $i$th influence using one $\Qstat$ query with tolerance $\tau=1/3$ (we let $T=\{S\subseteq \01^n:s_i=1\}$ in Lemma~\ref{lem:stattoestimateinf}, in which case we have $\sum_{S\in T}\widehat{f}(S)^2=\Inf_i(f)$). Suppose $\Inf_i(f)=1$, then  the outcome of the $i$th $\Qstat$ query is in the interval $[2/3,4/3]$ and if $\Inf_i(f)=0$, the outcome of the  $\Qstat$ query is in the interval $[-1/3,1/3]$. Given the outcomes of the queries, a quantum learning algorithm can easily learn $s \in \01^n$, and hence~$c$ exactly. {In order to understand why this algorithm can be implemented efficiently observe that for $T=\{S\subseteq \01^n:s_i=1\}$, the corresponding $M$ we need to implement in Lemma~\ref{lem:stattoestimateinf} can be written as 
	   \begin{align}
	       \label{eq:defnofM}
	   M=\sum_{S\in T}\ketbra{S}{S}= \ketbra{1}{1}_i\otimes \mathsf{H}^{\otimes(n-1)}\cdot\Big(\ketbra{0}{0}^{\otimes [n]\backslash \{i\}}\Big)\cdot \mathsf{H}^{\otimes(n-1)},
	   	   \end{align}
	   where the $i$th qubit is fixed to $\ketbra{1}{1}$ and the remaining $n-1$ qubits (excluding the $i$th qubit) can be obtained by applying the Hadamard transform on the $n-1$ qubits. Since $M$ can be implemented using $\poly(n)$ gates, one can learn parities quantum efficiently. }
	 	\end{proof}

\subsection{Learning $k$-juntas}
Using a similar idea to the parities problem, we show how to efficiently learn the class of $O(\log n)$-juntas under the uniform distribution in polynomial time in the $\Qsq$ model.   The idea of this quantum learning algorithm is to first learn most of $k$ influential variables of a junta using Lemma~\ref{lem:stattoestimateinf} and then  approximates all the $2^k$ Fourier coefficients of the function using $2^k$ classical $\Stat$ queries with tolerance $2^{-k}$.

\newcommand{\textlemjuntas}{
		Let $\Cc$ be the concept class of $k$-juntas. Then, $O(n+2^{O(k)})$ many $\Qstat$ queries  with tolerance at least $O(\varepsilon\cdot 2^{-k/2})$ suffices to $\varepsilon$-learn $\Cc$ under the uniform distribution.
}
	\begin{lemma}
	\label{lem:juntas}
	\textlemjuntas
 	\end{lemma}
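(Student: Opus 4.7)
The plan is to execute the two-phase strategy indicated in the paragraph preceding the lemma: first locate the relevant variables of the target $k$-junta $c^*\in\Cc$, and then reconstruct $c^*$ by estimating its Fourier coefficients on that set. Let $J\subseteq[n]$ be the unknown set of at most $k$ variables on which $c^*$ depends, so that $\widehat{c^*}(S)=0$ whenever $S\not\subseteq J$.

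\textbf{Phase 1 (finding the relevant variables).} For each $i\in[n]$, invoke Lemma~\ref{lem:stattoestimateinf} with $T_i=\{S\in\01^n : S_i=1\}$ and tolerance $\tau_1$ to obtain $\tilde a_i$ with $|\tilde a_i-\Inf_i(c^*)|\le\tau_1$. Set $J':=\{i\in[n]:\tilde a_i\ge 2\tau_1\}$. Since $\Inf_i(c^*)=0$ for every $i\notin J$, such indices satisfy $\tilde a_i\le\tau_1<2\tau_1$ and are excluded, so $J'\subseteq J$ and $|J'|\le k$. Moreover every $i\in J\setminus J'$ satisfies $\Inf_i(c^*)\le\tilde a_i+\tau_1\le 3\tau_1$.

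\textbf{Phase 2 (estimating Fourier coefficients on $J'$).} Using one classical $\Stat$ query per subset---which, as already observed in Section~\ref{sec:defnqsqmodel}, can be implemented by a $\Qstat$ query on a diagonal observable---for each $S\subseteq J'$ issue the query with $\phi_S(x,y)=y\,\chi_S(x)$ and tolerance $\tau_2$, obtaining $\tilde b_S$ with $|\tilde b_S-\widehat{c^*}(S)|\le\tau_2$. Define
\[
\tilde f(x):=\sum_{S\subseteq J'}\tilde b_S\,\chi_S(x), \qquad h(x):=\sgn(\tilde f(x)).
\]

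\textbf{Phase 3 (error analysis).} By Parseval,
\[
\|c^*-\tilde f\|_2^2=\sum_{S\not\subseteq J'}\widehat{c^*}(S)^2+\sum_{S\subseteq J'}\bigl(\widehat{c^*}(S)-\tilde b_S\bigr)^2.
\]
The second sum is at most $2^k\tau_2^2$. For the first, combining $\widehat{c^*}(S)=0$ off $J$ with the inclusion $\{S:S\not\subseteq J'\}\subseteq\bigcup_{i\in J\setminus J'}T_i$ yields
\[
\sum_{S\not\subseteq J'}\widehat{c^*}(S)^2\;\le\;\sum_{i\in J\setminus J'}\Inf_i(c^*)\;\le\;3k\tau_1.
\]
Because $c^*(x)\in\pmset{}$, any $x$ with $h(x)\ne c^*(x)$ satisfies $|\tilde f(x)-c^*(x)|\ge 1$, so by Markov's inequality $\Pr_{x\sim\uniform}[h(x)\ne c^*(x)]\le\|c^*-\tilde f\|_2^2\le 3k\tau_1+2^k\tau_2^2$. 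Choosing $\tau_1=\Theta(\varepsilon/k)$ and $\tau_2=\Theta(\varepsilon\cdot 2^{-k/2})$ bounds each term by $O(\varepsilon)$ (the second term is even $O(\varepsilon^2)$), so after rescaling constants the total is at most $\varepsilon$. The stricter of the two tolerances is $\tau_2$, giving the claimed $O(\varepsilon\cdot 2^{-k/2})$, and the total query count is $n+2^k=O(n+2^{O(k)})$.

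\textbf{Expected obstacle.} The conceptually delicate point is Phase~1: one must simultaneously argue (a) that no irrelevant variable ends up in $J'$, which requires the threshold $2\tau_1$ to strictly exceed $\tau_1$, and (b) that every truly relevant variable missed by $J'$ contributes only $O(\tau_1)$ to the ``leftover'' Fourier mass, via the union bound on $\{S:S\not\subseteq J'\}$. The remaining work is a routine Parseval/triangle-inequality balancing that picks $\tau_1$ and $\tau_2$ so that the two error terms are each $O(\varepsilon)$, which forces the worst-case tolerance to scale as $O(\varepsilon\cdot 2^{-k/2})$.
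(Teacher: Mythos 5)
Your proof is correct and follows essentially the same two-phase strategy as the paper's: estimate influences via $\Qstat$ to identify the relevant coordinates, then estimate Fourier coefficients on all $2^{|J'|}$ subsets of that set via diagonal (classical) $\Stat$ queries, and finish with the standard Parseval/Markov argument that $\Pr[h\neq c^*]\le\|\tilde f-c^*\|_2^2$. The only differences are cosmetic choices of thresholds and tolerances (your $\tau_2=\Theta(\varepsilon 2^{-k/2})$ versus the paper's $\sqrt{\varepsilon/2}\cdot 2^{-k/2}$); both satisfy the stated tolerance bound.
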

	\begin{proof}
		Our algorithm is divided into two steps: first, we use $O(n)$ $\Qstat$ queries with tolerance $O(\varepsilon/k)$ to learn the variables $i\in [n]$ for which $\Inf_i(f)\geq \varepsilon/k$. Let $T$ be the set of such variables.  %
		Next we use classical statistical queries to estimate all the Fourier coefficients $\widehat{f}(V)$ for every $V\subseteq T$ using at most $O(2^k)$ $\Qstat$ queries with tolerance $2^{-\Omega(k)}$.

    Let $c$ be a $k$-junta over the variables in $Q\subseteq [n]$ with size $|Q|=k$, i.e., $c(x)=f(x_Q)$ for some arbitrary $f:\01^k\rightarrow \01$. Then, it is not hard to see that for all $i\notin Q$, we have $\Inf_i(c)=0$. 
    Since the goal of a quantum statistical learner is to $\varepsilon$-learn $c$, it suffices to obtain the variables $i\in [n]$ whose influence $\Inf_i(f)$ is at least $\varepsilon/2k$.
    Our quantum learning algorithm proceeds as follows: for every $i\in [n]$, we use Lemma~\ref{lem:stattoestimateinf} to estimate $\Inf_i(f)$ upto precision $\varepsilon/(5k)$. Suppose the outcome of these queries is $\alpha_1,\ldots \alpha_n$, we let 
		$$
		T=\Big\{i\in [n]: \alpha_i \geq \varepsilon/4k \Big\}.
		$$
	First, notice that $T\subseteq Q$, since $\Inf_i(f) = 0$ implies that $\alpha_i \leq \varepsilon/5k$. Observe also that for every $i\in Q \setminus T$, we have $\Inf_i(f) < \varepsilon/(2k)$: in order to see this, suppose $\Inf_i(f) \geq \varepsilon/2k$ for some $i\in Q\setminus T$, then the $\Qstat$ query to estimate $\Inf_i(f)$ would produce an $\alpha_i$ such that 
		$$
		\alpha_i \geq \Inf_i(f)-\varepsilon/(5k) \geq \varepsilon/4k,
		$$
    but this contradicts the fact that $i\not\in T$. Hence $T$ has captured all the variables with high influences. In particular
		\begin{align}
		\label{eq:influenceissmall}
        \sum_{i\in [n]\backslash T}\Inf_i(c)=\sum_{i\in Q\backslash T}\Inf_i(f)\leq k\cdot \frac{\varepsilon}{2k}=\varepsilon/2,
		\end{align}
		where the first equality used the fact that $\Inf_i(c)=0$ for every $i\notin Q$ and the inequality used that there are at most $k$ influential variables, hence $|Q|\leq k$

		In the second phase, we $\varepsilon$-approximately learn the junta. In order to do this, we estimate the Fourier coefficients for  all subsets of  $T$. For every $V\subseteq T$, we make one $\Qstat$ query to approximate  $\widehat{f}(V)$ upto error $\sqrt{\varepsilon/2}\cdot 2^{-k/2}$: for $V\subseteq [n]$, let $\phi(x,b)= b\cdot (-1)^{V\cdot x}$ for all $x\in \01^n,b\in \01$, hence $\Exp_x [\phi(x,c(x))]=\Exp_x \big[c(x)\cdot (-1)^{V\cdot x}\big]=\widehat{c}(V)$.
		Overall, it takes $2^{|V|}\leq 2^k$ many $\Qstat$ queries to estimate all Fourier coefficients $\{\widehat{c}(V):V\subseteq T\}$. Once we obtain all these approximations $\{\alpha_V\}_{V\subseteq Q}$, we output the function $$
		g(x)=\sign \Big(\underbrace{\sum_{V\subseteq T} \alpha_V \cdot  \chi_V(x)}_{:=h(x)}\Big), \quad \text{ for every } x\in \01^{n}.
		$$
		We now argue that $g$ is in fact $\varepsilon$-close to $c$:
	\begin{alignat}{2}
				\label{eq:gclosetocjunta}
				\begin{aligned}
          \Pr_{x\in \01^n}[c(x)\neq g(x)]&=\Exp_{x}[c(x)\neq \sign(h(x))]\\ &\leq \Exp_x[|c(x)-h(x)|^2]\\
		 &=\sum_{V}(\widehat{h}(V)-\widehat{c}(V))^2
		\begin{aligned}[t]
    	      &= \sum_{V\subseteq T} (\alpha_V-\widehat{c}(V))^2 +\sum_{V \subseteq [n]\backslash T} \widehat{c}(V)^2\\
		  & \leq 2^k\cdot \frac{\varepsilon}{2^{k+1}}+\sum_{i\in [n]\backslash T}\Inf_i(c)\leq \varepsilon,
         \end{aligned}
         \end{aligned}
		\end{alignat}
		where $[\cdot]$ is the indicator of an event, the second equality used Plancherel's identity to conclude $\Exp_x (c(x)-h(x))^2=\sum_V(\widehat{c}(V)-\widehat{h}(V))^2$, the second inequality follows by definition of $\Inf_i(c)=\sum_{S\subseteq [n]: S\ni i}\widehat{c}(S)^2$ and the fact that $\Qstat$ queries return $\alpha_V$ which are a $\varepsilon/2\cdot 2^{-k/2}$ approximation of~$\widehat{c}(V)$, and finally the last inequality used Eq.~\eqref{eq:influenceissmall}. {For the same reason as in Lemma~\ref{lem:parities}, phase~$1$ can be performed quantum-efficiently (since the $M$s can be expressed as Eq.~\eqref{eq:defnofM} which takes $\poly(n)$ gates to implement) and phase $2$ takes time polynomial in $n,2^k$ since each $\phi$ can be computed in time $O(n)$ and we make  $2^k$ many  $\Stat$ queries.}
	\end{proof}

  Notice that classically, every $\Sq$ learner for $k$-juntas needs to make $n^{\Omega(k)}$ $\Stat$ queries with tolerance at least $n^{-\Omega(k)}$, since this class contains at least $\binom{n}{k}$ distinct parity functions.

\subsection{ Learning Disjunctive Normal Forms (DNFs)}	Finally, we give a polynomial time learning for learning $\poly(n)$-sized DNFs in the $\Qsq$ model. Classically we need $n^{\Omega(\log n)}$ classical $\Stat$ queries (with tolerance $n^{-\Omega(\log n)}$)  to learn DNFs (since $\poly(n)$-sized DNFs contain $O(\log n)$-juntas which in turn contain at least $n^{O(\log n)}$ distinct parity~functions).

    	The key step of the proof is to replace the membership queries in the well-known Goldreich-Levin ($\textsf{GL}$) algorithm~\cite{goldreich:hardcore,kushimansour:GL} by quantum statistical queries. In particular, for a function $c:\01^n\rightarrow \pmset{}$, our ``quantum statistical" $\textsf{GL}$ algorithm makes $\poly(n,1/\tau)$ $\Qstat$ queries with tolerance at least $\tau$ and returns a set $U=\{T_1,\ldots,T_\ell\}\subseteq [n]$ such that if $|\widehat{c}(T)|\geq \tau$, then $T\in U$, and if $T\in U$, we have $|\widehat{c}(T)|\geq \tau/2$. Using this subroutine, for an $s$-term DNF we can find all the  Fourier coefficients which satisfy $|\widehat{c}(T)|\geq 1/s$ using $\poly(n,s)$ many $\Qstat$ queries. After this, one can use the classical algorithm for DNF learning by \cite{Feldman12:dnfwithoutboosting} in order to approximate the $s$-term DNF. Overall our quantum statistical oracle uses $\poly(n)$ $\Qstat$ queries of tolerance $1/\poly(n)$ to learn $\poly(n)$-sized DNF formulas.

	In order to prove the main lemma statement, we first argue that,  in the classical Goldreich-Levin  algorithm (\textsf{GL} algorithm)~\cite{goldreich:hardcore,kushimansour:GL}, we can replace \emph{classical membership queries} by \emph{quantum statistical queries}. 
	
	\begin{theorem}[Goldreich-Levin theorem using $\Qstat$ queries]
		\label{thm:goldreichlevin} Let $f:\pmset{n}\rightarrow \pmset{}$,  $\tau \in (0,1]$. There exists a $\poly(n,1/\tau)$-time quantum statistical learning algorithm that with high probability outputs $U=\{T_1,\ldots,T_\ell\}\subseteq [n]$ such that: (i) if $|\widehat{f}(T)|\geq \tau$, then $T\in U$; and (ii) if $T\in U$, then $|\widehat{f}(T)|\geq \tau/2$.	
	\end{theorem}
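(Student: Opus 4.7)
The plan is to port the standard Kushilevitz--Mansour / Goldreich--Levin bucket-refinement algorithm to the $\Qsq$ setting, replacing its membership-query estimates of prefix-bucket Fourier mass by single invocations of \Cref{lem:stattoestimateinf}. Concretely, I would traverse a binary tree of depth $n$: each node at depth $k$ is labeled by a prefix $a \in \{0,1\}^k$ and corresponds to the bucket $T_a = \{S \in \{0,1\}^n : S_{1:k} = a\}$ with Fourier mass $B(a) := \sum_{S \in T_a}\widehat{f}(S)^2$. Starting from the root (which is alive by definition), for any live node $a$ I would estimate $B(a)$ with one $\Qstat$ call via \Cref{lem:stattoestimateinf} at tolerance $\tau^2/4$, declare $a$ alive iff the estimate is at least $3\tau^2/4$, and recurse on the two children $a0$ and $a1$. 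The final output is $U = \{T \in \{0,1\}^n : T \text{ survives at depth } n\}$.

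For correctness, the triangle inequality gives two clean thresholds: any $a$ with $B(a) \geq \tau^2$ yields an estimate at least $3\tau^2/4$ and survives, while any $a$ with $B(a) < \tau^2/2$ yields an estimate strictly less than $3\tau^2/4$ and is pruned. Hence if $|\widehat{f}(T)| \geq \tau$, every prefix $a$ of $T$ satisfies $B(a) \geq \widehat{f}(T)^2 \geq \tau^2$ and is kept alive, so $T$ reaches depth $n$ and enters $U$ --- establishing property (i). Conversely, any surviving leaf $T$ must satisfy $\widehat{f}(T)^2 = B(T) \geq \tau^2/2$, so $|\widehat{f}(T)| \geq \tau/\sqrt{2} > \tau/2$ --- establishing property (ii).

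For efficiency, Parseval's identity gives $\sum_{a \in \{0,1\}^k} B(a) \leq 1$, so at most $2/\tau^2$ prefixes at each depth have $B(a) \geq \tau^2/2$; consequently at most $O(1/\tau^2)$ nodes are alive at each of the $n$ levels, for a total of $O(n/\tau^2)$ $\Qstat$ queries at tolerance $\Omega(\tau^2)$. Each observable $M_{T_a} = \sum_{S \in T_a}\ketbra{S}{S} = \ketbra{a}{a}\otimes \id^{\otimes (n-k)}$ is diagonal and of tensor-product form, so the measurement circuit described in \Cref{lem:stattoestimateinf} can be implemented with $\poly(n)$ gates, giving overall running time $\poly(n,1/\tau)$. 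The only thing to track carefully is the choice of threshold constants so that the $\tau^2/4$-tolerance estimates cleanly separate the $\geq \tau^2$ and $<\tau^2/2$ regimes; beyond this bookkeeping no new ingredient is required, since \Cref{lem:stattoestimateinf} already packages the entire quantum subroutine (basis change, post-selection, and projection) into a single $\Qstat$ call.
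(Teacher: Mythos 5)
Your proposal is correct and takes essentially the same approach as the paper: both observe that the only use of membership queries in the classical Kushilevitz--Mansour/Goldreich--Levin routine is to estimate prefix-bucket Fourier mass $\sum_{S\in T_a}\widehat{f}(S)^2$, and both replace that estimation by a single invocation of \Cref{lem:stattoestimateinf} with a diagonal-in-prefix observable. The paper simply defers the remaining bookkeeping (thresholds, pruning bound via Parseval, observable implementation) to the standard classical proof in O'Donnell's book, whereas you spell it out explicitly; the content is the same.
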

	We do not prove Theorem~\ref{thm:goldreichlevin} here since it follows the classical \textsf{GL} algorithm almost exactly.\footnote{An interested reader is referred to Section $3.5$ of \cite{donnell:book} for details.}  Instead, we only state the difference between the proofs of classical \textsf{GL} algorithm and Theorem~\ref{thm:goldreichlevin}. In classical \textsf{GL} algorithm, one uses classical membership queries to perform the following task   in time $\poly(n,1/\varepsilon)$: let $Q\subseteq [n]$, for every $S\subseteq Q$, estimate $\sum_{V\subseteq {Q}^c}\widehat{c}(S\cup V)^2$ upto precision $\varepsilon$. Instead, in the proof of Theorem~\ref{thm:goldreichlevin}, we simply use Lemma~\ref{lem:stattoestimateinf} to estimate $\sum_{V\subseteq {Q}^c}\widehat{c}(S\cup V)^2$ using one quantum statistical query $\Qstat(c,\varepsilon)$ (we use Lemma~\ref{lem:stattoestimateinf} by setting $T=\{(S\cup V):V\subseteq  Q^c\}$ for a fixed $S$). The remaining part of \textsf{GL} algorithm, as well as the proof of Theorem~\ref{thm:goldreichlevin}, does not involve membership queries to $c$. {Observe again that for a fixed $S$ and $T=\{(S\cup V):V\subseteq  Q^c\}$, we can write~$M$ in Lemma~\ref{lem:stattoestimateinf} as 
		 $$
		 M=\sum_{R\in T}\ketbra{R}{R}= \ketbra{1}{1}_S\otimes \ketbra{0}{0}_{Q\backslash S}\otimes \mathsf{H}^{\otimes(n-|Q|)}\cdot\Big(\ketbra{0}{0}^{\otimes [n]\backslash Q}\Big)\cdot \mathsf{H}^{\otimes(n-|Q|)},
	$$
	   wherein for all $i\in S$, we fix the $i$th qubit to $\ketbra{1}{1}$, for $j\in Q\backslash S$ we set the  $j$th qubit to be $\ketbra{0}{0}$ and the remaining $n-|Q|$ qubits can be obtained by applying the Hadamard transform on the $n-|Q|$ qubits. Clearly such $M$s can be implemented quantum efficiently using $\poly(n)$ gates.
	We now prove our main lemma using Theorem~\ref{thm:goldreichlevin}.}
	
	\begin{lemma}
	\label{lem:learnDNF}
		Let $\Cc$ be the concept class of $\poly(n)$-sized DNFs. Then there exists a $\poly(n)$-query $\Qsq$ algorithm that $\varepsilon$-learns $\Cc$ under the uniform distribution.
 	\end{lemma}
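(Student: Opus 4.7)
The plan is to combine the quantum statistical Goldreich--Levin subroutine of Theorem~\ref{thm:goldreichlevin} with the classical boosting-free DNF learner of Feldman~\cite{Feldman12:dnfwithoutboosting}, simulating every oracle that Feldman's algorithm needs by $\Qstat$ queries. Let $c:\01^n \to \pmset{}$ be an $s$-term DNF with $s = \poly(n)$.

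The first step is to invoke Theorem~\ref{thm:goldreichlevin} with tolerance $\tau = \Theta(\varepsilon/s)$ to compute a set $U \subseteq \01^n$ such that every $T$ with $|\widehat{c}(T)| \geq \varepsilon/s$ lies in $U$, and conversely every $T \in U$ satisfies $|\widehat{c}(T)| \geq \varepsilon/(2s)$. By Theorem~\ref{thm:goldreichlevin} and the observable construction discussed after it, this step uses $\poly(n,s/\varepsilon) = \poly(n,1/\varepsilon)$ many $\Qstat$ queries and can be implemented quantum-efficiently.

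Next, I would feed $U$ into Feldman's DNF learning algorithm. The key structural point is that the algorithm of~\cite{Feldman12:dnfwithoutboosting} interacts with the target concept only through two primitives: (i)~additive $\tau$-approximations of individual Fourier coefficients $\widehat{c}(S)$ under the uniform distribution, and (ii)~a Goldreich--Levin-style subroutine returning all Fourier coefficients of $c$ of magnitude at least $\tau'$. Primitive (i) is a classical statistical query, since $\widehat{c}(S) = \Exp_{x \sim \uniform}[c(x)\chi_S(x)]$, and every classical $\Stat$ query is a special case of a $\Qstat$ query via a diagonal observable. Primitive (ii) is exactly what Theorem~\ref{thm:goldreichlevin} delivers. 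Hence the entire algorithm can be simulated with $\poly(n,1/\varepsilon)$ many $\Qstat$ queries at tolerance $1/\poly(n,1/\varepsilon)$, producing a hypothesis $h$ with $\Pr_{x \sim \uniform}[h(x) \neq c(x)] \leq \varepsilon$.

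The main obstacle to anticipate is whether Feldman's algorithm secretly relies on access modes unavailable to a $\Qsq$ learner under the uniform distribution --- in particular, membership queries or Fourier estimation under \emph{reweighted} distributions (as in the Harmonic Sieve of Jackson, where boosting forces the weak learner to operate on non-uniform distributions). The boosting-free variant of~\cite{Feldman12:dnfwithoutboosting} was specifically engineered to avoid both: it works entirely in the uniform-distribution Fourier oracle model, so the reduction goes through without needing any mechanism to simulate non-uniform queries. This leaves only routine bookkeeping to track the total query count and tolerance, which are both $\poly(n,1/\varepsilon)$ as claimed.
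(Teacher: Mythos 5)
Your proof takes essentially the same route as the paper: replace the membership queries in Feldman's boosting-free DNF learner~\cite{Feldman12:dnfwithoutboosting} with the quantum statistical Goldreich--Levin subroutine of Theorem~\ref{thm:goldreichlevin}, and observe that the remaining steps only require uniform-distribution Fourier estimates, which are classical $\Stat$ queries and hence special cases of $\Qstat$ queries. Your explicit articulation of the two access primitives and the remark about why the boosting-free variant (unlike the Harmonic Sieve) avoids reweighted distributions are both correct and mirror the paper's reasoning, including a footnote where the paper notes the boosting-based alternative via~\cite{bshoutyandjackson:DNF,aslam:boosting}.
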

	\begin{proof}
        Our quantum DNF learning algorithm follows the same ideas as the classical DNF learning by Feldman~\cite{Feldman12:dnfwithoutboosting}. We simply replace the classical membership queries in Feldman's algorithm by quantum statistical  queries.\footnote{Alternatively, we could have used the  weak-quantum learning for DNFs from \cite{bshoutyandjackson:DNF}, followed by the statistical query \emph{boosting} algorithm by \cite{aslam:boosting}. %
        } The only use of membership queries in Feldman's algorithm (in particular, Corollary $5.1$ of \cite{Feldman12:dnfwithoutboosting}) is to run \textsf{GL} algorithm to collect all the ``large" Fourier coefficients of low Hamming weight: i.e., for an $s$-term DNF $c$, Feldman's learning algorithm uses membership queries to find all the $S$s that satisfy  $|\widehat{c}(S)|\geq \Omega(\varepsilon/s)$. In order to collect such $S$s, we use \textsf{GL} (see Theorem~\ref{thm:goldreichlevin}) which makes $\poly(n,s/\varepsilon)$ quantum statistical queries to find all the heavy Fourier coefficients of $c$ and discard those coefficients with large Hamming weight. The remaining part of the Feldman's algorithm in order to $\varepsilon$-learn $c$ does not require membership queries to $c$ and our quantum learner simply continues with Feldman's algorithm. The overall running time of Feldman's algorithm and our quantum learning algorithm is $\poly(n,s/\varepsilon)=\poly(n/\varepsilon)$ since we are concerned with $S=\poly(n)$-sized DNFs. 
        	\end{proof}

\section{Statistical query dimension}\label{sec:introlowerbounds}
In a seminal work, Blumer et al.~\cite{blumer:vc} showed that  sample complexity of PAC learning is characterized by a combinatorial parameter called $\VC$ dimension (which was defined by Vapnik and Chervonenkis~\cite{vapnik:vcdimension}). Similarly, Blum et al.~\cite{blum:sqdim} introduced a combinatorial parameter called \emph{statistical query dimension} that characterizes the sample complexity of weak  statistical query learning.\footnote{Here, ``weak" refers to the fact that the output hypothesis $h$ of the learner needs to weak-approximate the target concept $c^*$ under the unknown distribution $D$, i.e., $\Pr_{x\sim D}[h(x)=c^*(x)]\geq 1/2+1/\poly(n)$.} 	Roughly the statistical query dimension for a concept class $\Cc$ and distribution $D$ measures the maximum number of concepts in $\Cc$ that are nearly uncorrelated with respect to $D$.  Let us define it more formally.
	\begin{definition}
		Let $\Cc$ be a concept class and $D$ be a distribution. Then $\WSQDIM(\Cc,D)$ is defined as the largest $d$ such that there exists $\{c_1,\ldots,c_d\}\subseteq \Cc$ such that $|\Exp_{x\sim D}[c_i(x)\cdot c_j(x)]|\leq \frac{1}{d}$ for every $i\neq j$. We define $\WSQDIM(\Cc)=\max_D \{\WSQDIM(\Cc,D)\}$.
	\end{definition}
	
	Using this combinatorial quantity, Blum et al.~\cite{blum:sqdim} showed the following characterization.
		\begin{theorem}[Blum et al.~\cite{blum:sqdim}]
		\label{thm:weak_sq}
		Let $\Cc\subseteq \{c:\01^n\rightarrow \01\}$ be a concept class and $D$ be a distribution. Suppose $\WSQDIM(\Cc,D)=~d$.
		\begin{itemize}
			\item There exists a $\Sq$ algorithm for learning $\Cc$ under $D$, with error $\frac{1}{2} - \frac{1}{3d}$, that makes $d$ $\Stat$ queries each with tolerance at least $1/(3d)$.
			\item If all $\Stat$ queries are made with tolerance $\geq d^{-1/3}$, then at least $d^{1/3}$ queries to the $\Stat$ oracle is necessary in order to weakly $\Sq$ learn $\Cc$.
		\end{itemize}
	\end{theorem}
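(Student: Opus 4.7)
The plan is to prove the upper and lower bounds separately, with the positive direction obtained from a greedy witness-set construction and the negative direction from an information-theoretic adversary argument. Throughout, I identify Boolean concepts with $\pm 1$-valued functions so that correlations coincide with $\Exp_{x\sim D}[c_i(x)c_j(x)]$.

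For the upper bound, I would construct a set $S=\{c_1,\ldots,c_k\}\subseteq \Cc$ greedily: start with $S=\emptyset$ and repeatedly add any $c\in \Cc$ satisfying $|\Exp_{x\sim D}[c(x)c_i(x)]|\leq 1/d$ for every current $c_i\in S$, stopping when no such $c$ remains. All pairwise correlations in $S$ are then at most $1/d$, so $k\leq d=\WSQDIM(\Cc,D)$; otherwise $S$ itself would witness a strictly larger weak $\Sq$ dimension. By the stopping rule, every $c^*\in \Cc$ has some $c_i\in S$ with $|\Exp[c_i c^*]|>1/d$ (trivially when $c^*\in S$ and by failure of the greedy step otherwise). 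The learner then issues, for each $c_i\in S$, one $\Stat$ query with function $\phi_i(x,y)=c_i(x)\cdot y$ at tolerance $1/(3d)$, obtaining $\alpha_i$ with $|\alpha_i-\Exp[c_i c^*]|\leq 1/(3d)$. It outputs $h=\sign(\alpha_{i^*})\cdot c_{i^*}$ for $i^*=\arg\max_i|\alpha_i|$; a short triangle-inequality computation combining the correlation lower bound with the tolerance delivers the claimed weak-learning error $\tfrac{1}{2}-\tfrac{1}{3d}$.

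For the lower bound, I would run the standard $\Sq$ adversary argument. Fix a witness set $\{c_1,\ldots,c_d\}$ realizing $\WSQDIM(\Cc,D)=d$ and take the target distribution uniform over this set. Given a query $(\phi,\tau)$ with $\tau\geq d^{-1/3}$, the adversary answers $\bar v=\tfrac{1}{d}\sum_i v_i$ where $v_i=\Exp_{x\sim D}[\phi(x,c_i(x))]$; a concept $c_i$ is ``eliminated'' only when $|v_i-\bar v|>\tau$. The technical heart is a Bessel/Parseval-style bound showing at most $O(1/\tau^2)$ concepts are eliminated per query: because the $c_i$ form a nearly orthogonal family in $L^2(D)$ with pairwise inner products bounded by $1/d$, the associated Gram matrix is well-conditioned (Gershgorin-style), so expanding $\phi(\cdot,c_i(\cdot))$ in this system gives $\sum_i(v_i-\bar v)^2=O(1)$ and Markov completes the count. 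Applying this per query along the adaptive decision tree, at most $Q\cdot O(1/\tau^2)$ concepts can ever be distinguished, and since a weak learner on the uniform target distribution must effectively distinguish nearly all $d$ concepts, we obtain $Q=\Omega(d\tau^2)\geq \Omega(d^{1/3})$.

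The principal obstacle is carrying out the Bessel/Parseval-style bound under only approximate orthogonality: with exactly orthonormal $c_i$'s the inequality is immediate from Parseval, whereas here one must absorb off-diagonal cross-terms of magnitude $\leq 1/d$ into the constants via a Gershgorin bound on the Gram matrix. The reduction from adaptive to non-adaptive queries needed to apply the per-query elimination count cleanly, and the translation from ``few concepts distinguished'' to ``weak learning is impossible,'' both follow the template of Blum et al.\ once the variance bound is in hand.
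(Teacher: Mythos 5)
The paper states this theorem as a citation to Blum et al.\ and gives no proof of its own, so there is nothing in the paper to compare against; I will therefore assess your reconstruction directly. Your overall structure is the right one --- a greedy witness set for the upper bound, and a variance/elimination count via near-orthogonality for the lower bound --- and matches the standard BFJ+94 argument. However, there is one step that does not go through as you wrote it.

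In the upper bound you assert that the greedily built set $S$ has $|S|=k\le d$ ``otherwise $S$ itself would witness a strictly larger weak $\Sq$ dimension.'' Under the paper's \emph{self-referential} definition of $\WSQDIM$, this inference is invalid: exhibiting $k>d$ concepts with pairwise correlations bounded by $1/d$ does \emph{not} certify $\WSQDIM\ge k$, because the definition at size $k$ requires the stricter threshold $1/k<1/d$. The bound on $|S|$ must instead come from a positivity argument: since each $c_i$ has unit $L^2(D)$ norm,
\begin{equation*}
0\le \Big\| \sum_{i=1}^{k} c_i \Big\|_{L^2(D)}^2 \;=\; k+\sum_{i\ne j}\Exp_{x\sim D}[c_i(x)c_j(x)] \;\ge\; k - \frac{k(k-1)}{d},
\end{equation*}
which forces $k\le d+1$, not $k\le d$. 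This is still good enough to get an $O(d)$-query weak learner, but the ``otherwise $S$ witnesses a larger dimension'' justification should be replaced by this norm computation (or one should simply not insist on exactly $d$ queries). Relatedly, the triangle-inequality endgame you sketch yields correlation $\ge 1/(3d)$ with the chosen $c_{i^*}$, i.e.\ error at most $\tfrac12-\tfrac{1}{6d}$, a factor-two weaker constant than the stated $\tfrac12-\tfrac{1}{3d}$; this is a harmless constant mismatch but worth flagging since you cite the exact bound. The lower-bound sketch is sound: writing $\phi(x,y)=g(x)+y\,h(x)$, one gets $v_i-\bar v=\langle h,c_i\rangle_D - \tfrac1d\sum_j\langle h,c_j\rangle_D$, and since the Gram matrix $G$ of the $c_i$'s satisfies $\|G\|\le 1+(d-1)/d<2$ by Gershgorin, $\sum_i\langle h,c_i\rangle_D^2\le \|G\|\cdot\|h\|_{L^2(D)}^2\le 2$, giving the per-query elimination bound $O(1/\tau^2)$; the remaining bookkeeping you correctly defer to the template.
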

	Subsequently, there have been many works~\cite{szorenyi2009characterizing,yang:sqdimension,feldman:completechar} that generalized and strengthened $\WSQDIM(\Cc,D)$ in order to characterize other variants of $\Sq$ learning model. We do not define these strengthened combinatorial parameters since we will not be using them.

We  now show that that for every concept class $\Cc$, distribution $D$ and tolerance $\tau>0$, every learner needs to make $\log_{1/\tau}(\WSQDIM(\Cc,D))$ many $\Qstat$ queries with tolerance at least $\tau$ in order to learn $\Cc$ under $D$ with error at most $1/2-1/d$.

	\begin{lemma}\label{lem:lb-qsqdim}
		Let $\tau>0$. Let $\Cc\subseteq \{c:\01^n\rightarrow \pmset{}\}$ and $D:\01^n\rightarrow [0,1]$ be a distribution such that  $\WSQDIM(\Cc,D)=d$. Then, every weak $\Qsq$ learning algorithm for $\Cc$ (with error at most $1/2- 1/d$) needs to make $\Omega(\log_{1/\tau} d)$ $\Qstat$ queries each of tolerance at least $\tau$. Moreover, this lower bound is tight for the class of parity functions on $n$ bits.
	\end{lemma}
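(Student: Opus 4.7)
The plan is an adversary argument. Since $\WSQDIM(\Cc,D)=d$, fix witnesses $\{c_1,\ldots,c_d\}\subseteq\Cc$ with $|\E_{x\sim D}[c_i(x)c_j(x)]|\le 1/d$ for all $i\ne j$. Against any candidate $\Qsq$ learner $\A$ that makes at most $Q$ queries, I simulate an adversarial oracle that maintains a still-consistent subset $S_t\subseteq\{c_1,\ldots,c_d\}$, with $S_0=\{c_1,\ldots,c_d\}$. On the $t$-th query $(M_t,\tau)$ the adversary computes the exact expectation values $v_i=\langle\psi_{c_i}|M_t|\psi_{c_i}\rangle\in[-1,1]$ for every $c_i\in S_t$, partitions $[-1,1]$ into $\lceil 1/\tau\rceil$ intervals of width $2\tau$, and returns the midpoint of the heaviest interval---which by construction is a valid $\tau$-approximate answer for every $c_i$ in that interval. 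By pigeonhole $|S_{t+1}|\ge|S_t|/\lceil 1/\tau\rceil$, so $|S_Q|\ge d/\lceil 1/\tau\rceil^Q$ after all $Q$ queries.

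Because every $c\in S_Q$ induces the same transcript, $\A$ outputs a single hypothesis $h$ that must simultaneously satisfy the weak-learning criterion $\E_D[h\cdot c]\ge 2/d$ for every $c\in S_Q$. To turn this into a contradiction I exploit the near-orthogonality of the witnesses: setting $b_i=\E_D[hc_i]$ and letting $G$ be the Gram matrix of $\{c_i\}$ in $L^2(D)$, one has $G_{ii}=1$ and $|G_{ij}|\le 1/d$, so Gershgorin gives $\|G\|\le 2$ and hence $\sum_i b_i^2\le \E_D[h^2]\cdot\|G\|\le 2$. Combined with $b_i\ge 2/d$ for all $c_i\in S_Q$, this forces $|S_Q|$ below a threshold; sharpening with the $L^\infty$ bound $|h|\le 1$ together with the structural estimate $\|\sum_{i\in S_Q}c_i\|^2_{L^2(D)}=|S_Q|\pm O(|S_Q|^2/d)$ tightens the threshold enough that $|S_Q|\ge d/\lceil 1/\tau\rceil^Q$ becomes unattainable once $Q<c\log_{1/\tau} d$ for a small constant $c$, yielding the claimed $\Omega(\log_{1/\tau} d)$ bound.

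For the tightness claim, $\WSQDIM(\{\chi_s\}_s,\uniform)=2^n$, and Lemma~\ref{lem:parities} already yields an $O(n)$-query algorithm at tolerance $1/3$, matching $\log_{1/\tau}d=\Theta(n)$ in that regime. For smaller $\tau$ one encodes $\Theta(\log(1/\tau))$ bits of $s$ per query by taking observables of the form $M\propto\sum_s f(s)\ket{s,1}\bra{s,1}$ (composed with the Hadamard/post-selection trick from Lemma~\ref{lem:stattoestimateinf}) for an injective $f:\01^n\to[-1,1]$ whose images are $\Theta(\tau)$-separated, giving a matching $O(n/\log(1/\tau))$ upper bound. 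The main obstacle is the quantitative contradiction step: the bare Gram bound alone gives only $|S_Q|\le d^2/2$, which is vacuous against the trivial $|S_Q|\le d$, so the real work is to combine $\|h\|_\infty\le 1$ with the near-orthogonality to obtain a genuinely sub-$d$ bound on how many nearly-uncorrelated concepts a single $\pm 1$-valued $h$ can weakly predict---the quantum analogue of the variance argument behind the classical $\tau^{-2}$-type bound of Blum et al.
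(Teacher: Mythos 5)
Your adversarial construction and the pigeonhole bound $|S_Q|\ge d/\lceil 1/\tau\rceil^Q$ match the paper's argument exactly (the paper states the covering fact slightly differently, as ``some $2\tau$-ball covers $\tau d$ of the $d$ points'', but it is the same step). Where you diverge --- and where you correctly sense trouble --- is the final contradiction. You attempt to bound from above the number of nearly-uncorrelated concepts that a single $\pm 1$-valued hypothesis $h$ can weakly predict, via a Gram/Gershgorin argument, and you yourself observe that this yields only the vacuous $|S_Q|\le d^2/2$. That instinct is right: this route does not close, and the sharpening you gesture at (combining $\|h\|_\infty\le 1$ with near-orthogonality) is not carried out. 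This is a genuine gap in the proposal.

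The paper avoids the need for any such ``simultaneous weak prediction'' bound by observing that it suffices to leave \emph{two} concepts consistent. Concretely: if $T<\log_{1/(2\tau)} d$ then $|S_T|\ge 2$, so there exist distinct $c_1,c_2\in\Cc'$ consistent with the entire transcript, and the near-orthogonality gives $\Pr_{x\sim D}[c_1(x)\ne c_2(x)]\ge 1/2-1/(2d)$. Since the learner's output $h$ is determined by the transcript, the adversary then declares the target to be whichever of $c_1,c_2$ is farther from $h$, and argues that one of $\Pr[c_1\ne h]$, $\Pr[c_2\ne h]$ must exceed the allowed error. This two-concept step is far simpler than the many-concept counting you attempted; the whole aim of the pigeonhole cascade is only to keep \emph{at least two} candidates alive, not to keep many alive and then count. (I'll remark, though, that even the paper's two-concept step deserves a second look on your part: for two genuinely uncorrelated Boolean functions $c_1,c_2$, a majority-type $h$ can in fact have $\Pr[c_1\ne h]=\Pr[c_2\ne h]=1/4$, so ``far apart'' alone does not immediately force $h$ to be $(1/2-1/(4d))$-far from one of them --- the threshold and the argument need care. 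But the structural plan --- cut to two candidates, exploit their pairwise distance --- is what you were missing, and your variance route is strictly harder than needed.)

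On tightness, your argument agrees with the paper's for $\tau=\Theta(1)$ (parities, $\WSQDIM=2^n$, $O(n)$ queries via Lemma~\ref{lem:parities}). The paper stops there; your additional sketch encoding $\Theta(\log(1/\tau))$ bits of $s$ per query is a plausible extension for the small-$\tau$ regime but goes beyond what the paper proves and is not needed for the stated claim.
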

\	\begin{proof}
		In order to prove the lemma, we will use the following simple fact: suppose we place $d$ points on the unit interval $[-1,1]$, then one can always find a $2\tau$-sized ball within the unit interval that covers $\tau d$ points. In order to see this: suppose by contradiction, assume that \emph{every} $2\tau$-sized ball in the interval $[-1,1]$ covers strictly lesser than $\tau d$ points. Then place $1/\tau$ $2\tau$-sized balls to cover $[-1,1]$. By assumption, since each of the $1/\tau$ $2\tau$-sized covers $< \tau d$ points, the total number of points in the interval $[-1,1]$ is strictly lesser than $d$, which contradicts the original assumption that we placed $d$ points in the interval~$[-1,1]$.
		
		Let $\Cc$ be a concept class and $D$ be a distribution satisfying $\WSQDIM(\Cc,D)=d$. By definition, there are $d$ concepts $\Cc'=\{c_1,\ldots,c_d\}$ such that for every $c_i\neq c_j \in \Cc'$, we have $\Big|\Exp_{x\sim D} [c_i(x)\cdot c_j(x)]\Big|\leq 1/d$.
		We now show that every $\Qsq$ learner for $\Cc$ with bias $1/d$ and tolerance at least $2\tau$, needs to make $\Omega(\log_{1/\tau} d)$ quantum statistical queries. The proof goes via an adversarial argument, i.e., we show how the replies of a $\Qstat$ oracle can  enforce a $\Qsq$ learner to make $\Omega(\log_{1/\tau} d)$ queries to the $\Qstat$ oracle. 
		
    Suppose the first $\Qstat$ query made by the learner is specified by the operator $M_1$ and precision $2\tau$. The adversarial $\Qstat(M_1,2\tau)$ oracle computes $A_1 = \big(\langle {\psi_{c}}\vert M_1 \vert  {\psi_{c}}\rangle \big)_{c \in \Cc'}$, which contains $d$ values. By the argument in the beginning of the proof, there exists a point $x_1$ such that at least $\tau d$ points in $A_1$ lie within a $2\tau$-radius of $x_1$.
    Then, the $\Qstat(M_1,2\tau)$ oracle responds with $x_1$. This narrows down the search space for the learner to at least~$\tau d$ candidate concepts $\Cc_1\subseteq \Cc$. Suppose the next quantum statistical query of the learner is $(M_2,2\tau)$, the $\Qstat(M_2,2\tau)$ oracle  computes     the sequence
    $A_2=\big(\langle {\psi_{c}}\vert M_2 \vert {\psi_{c}}\rangle\big)_{c \in \Cc_1}$
    with at least $\tau d$ values and responds with $x_2$ such that there are at least $\tau^2 d$ points in $A_2$ around $x_2$. This process repeats for all the $T$ $\Qstat$ queries $\{(M_i,2\tau):i\in [T]\}$ made by the~$\Qsq$~learner.
		
    Suppose $T< \log_{1/(2\tau)} d$ queries. Then after making $T$ queries, there are at least two distinct concepts $c_1,c_2\in\Cc'$ that satisfy 
    \begin{align}
        \label{eq:dist-concepts}
\Pr_{x\sim D}[c_1(x)\neq c_2(x)]\geq 1/2-1/2d   
    \end{align} and these concepts are consistent with all the $T$ quantum statistical queries queries made so far. Let $h$ be the output of the quantum learner. Given Eq.~\eqref{eq:dist-concepts}, it must be the case that either 
    $\Pr_{x\sim D}[c_1(x)\neq h(x)]\geq 1/2-1/4d$ or $\Pr_{x\sim D}[c_2(x)\neq h(x)]\geq 1/2-1/4d$, and we can choose, adversarially, the concept that maximizes such distance. Hence $T$, the number of queries made by an $\Qsq$ learner, needs to be at least $\log_{1/(2\tau)} d$, proving the lower bound.

    We now show that this lower bound is tight. Let $\PARITY_n$ be the class of parity functions on $n$ bits. For $c,c' \in \PARITY_n$ with $c \neq c'$, under the uniform distribution $\uniform_n$ we have that $\Exp_{x \sim \uniform_n} [c(x) c'(x)] = 0$. Since $|\PARITY_n| = 2^n$, we have $\WSQDIM(\PARITY,\uniform_n)=2^n$. Along with Lemma~\ref{lem:quantumsqparity}, the lower bound above is tight for $\PARITY_n$ under the uniform distribution.
    \end{proof}
    
\
\subsection{Connections to communication complexity}\label{sec:introcc}

We now present connections between the weak statistical query dimension and communication complexity.  Several works~\cite{kremer:CC,jainzhang:CC,ambainis:superdense} showed a surprising connection between communication complexity and learning theory: for every~$F:\01^n \times \01^n\rightarrow \01$, the classical and quantum one-way communication complexities of $F$ (under product distributions) are characterized by the $\VC$ dimension of the concept class $\Cc_F=\{F_x:\01^n\rightarrow \01: F_x(y)=F(x,y)\}_{x\in \01^n}$. We now prove that weak statistical query dimension of $\Cc_F$ also lower bounds the complexity in this communication model when $\varepsilon$ asymptotically goes to zero.

We now define the model formally. Let $\Cc\subseteq \{c:\01^n\rightarrow \01\}$ and  $\mu:\Cc\times \01^n\rightarrow [0,1]$ be a \emph{product distribution}. We consider the following task: $(c,x)$ are picked from $\Cc\times \01^n$ according to~$\mu$, and Alice is given as input $c\in \Cc$  and Bob is given $x\in \01^n$. Alice and Bob share random bits and Alice is allowed to send classical bits to Bob, who needs to output~$c(x)$ with probability $1/2+\gamma$. We let $\Rc^{\rightarrow,\times}_{1/2+\gamma}(c)$ be the \emph{minimum} number of bits that Alice communicates to Bob, so that he can output~$c(x)$ with probability at least $1/2+\gamma$ (where the probability is taken over the randomness of Bob as well as the distribution $\mu$). Let $\Rc^{\rightarrow,\times}_{1/2+\gamma}(\Cc)=\max_{c\in \Cc} \{ \Rc^{\rightarrow,\times}_{1/2+\gamma}(c)\}$.

     We show that quantum statistical  query complexity is an upper bound on $\Rc^{\rightarrow,\mathsf{\times}}_{1/2+\gamma}(\Cc)$.       The proof follows simply by observing that Alice can simulate the quantum statistical queries and sends the outputs to Bob, who runs the learning algorithm and obtains a hypothesis $h$ that $(1/2+\gamma)$-correlates with the unknown $c\in \Cc$. Bob then  outputs $h(x)$. 
     
     	\begin{lemma}
     	\label{lem:ub-cc-qsq}
		Let $\Cc\subseteq \{c:\01^n\rightarrow \01\}$ and $\gamma>0$. Then $\Rc^{\rightarrow,\mathsf{\times}}_{1/2+\gamma}(\Cc)\leq \Qsq_{1/2+\gamma}(\Cc)\cdot \log (1/\tau)$, where the $\Qsq$ learner for $\Cc$ makes queries with tolerance at least $\tau>0$.
	\end{lemma}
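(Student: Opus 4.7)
The plan is to have Alice simulate the entire $\Qsq$ learner internally and forward its query-responses to Bob in a way that uses $\log(1/\tau)$ bits per query. Since $\mu$ is a product distribution, $\mu$ factors as $\mu_A(c)\cdot D(x)$ for some distribution $D$ on $\{0,1\}^n$, so Alice (who knows both her input $c$ and the fixed marginal $D$) can compute the quantum example state $\ket{\psi_c}=\sum_x\sqrt{D(x)}\ket{x,c(x)}$ and, for any observable $M$, compute the exact expectation value $\bra{\psi_c}M\ket{\psi_c}$ on her own. This is the crucial feature of the product-distribution setting.

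Concretely, Alice and Bob use their shared randomness to fix the internal coins of the $\Qsq$ learner $\A$ that achieves $\Qsq_{1/2+\gamma}(\Cc)$. Both parties then simulate $\A$ step by step. Whenever $\A$ issues a query $(M_i,\tau)$, Alice computes $\beta_i := \bra{\psi_c}M_i\ket{\psi_c}\in[-1,1]$ exactly, rounds it to the nearest multiple of $\tau$ to obtain $\tilde\beta_i$ (which satisfies $|\tilde\beta_i-\beta_i|\leq \tau$ and is therefore a valid $\Qstat(M_i,\tau)$ response), and sends $\tilde\beta_i$ to Bob using $\lceil\log(2/\tau)\rceil = O(\log(1/\tau))$ bits. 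Bob feeds $\tilde\beta_i$ into his copy of $\A$ as the response to the $i$-th query, so both parties stay in sync and can compute the next query $M_{i+1}$; adaptivity is not an issue. After all $Q:=\Qsq_{1/2+\gamma}(\Cc)$ queries are answered, Bob obtains the hypothesis $h$ output by $\A$, which by hypothesis satisfies $\Pr_{x\sim D}[h(x)=c(x)]\geq 1/2+\gamma$, and he outputs $h(x)$.

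The total communication is at most $Q\cdot \lceil\log(2/\tau)\rceil$ bits, which matches the claimed bound $\Qsq_{1/2+\gamma}(\Cc)\cdot \log(1/\tau)$ up to an additive constant absorbed into the $\log(1/\tau)$ factor. Correctness of Bob's output follows because the distribution of $h$ is identical to the distribution of hypotheses produced by $\A$ on a valid sequence of $\Qstat$-oracle answers, and the success probability is taken over $c\sim\mu_A$, $x\sim D$, and the shared randomness---exactly the regime in which the $\Qsq$ guarantee applies, thanks to the product structure of $\mu$.

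The only mildly subtle point, which I would call out explicitly, is that the $\Qsq$ success guarantee is "for every $c^*\in\Cc$" the learner outputs an $\eps$-good hypothesis with probability over its own coins; averaging over $c\sim\mu_A$ preserves this, and the product structure of $\mu$ ensures the marginal distribution over $x$ that appears in the $\Qsq$ guarantee is the same $D$ Bob effectively samples from. There is no real obstacle, but writing this last sanity check cleanly is what makes the proof complete rather than a one-line remark.
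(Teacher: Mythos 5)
Your proof is correct and follows essentially the same approach as the paper: exploit the product structure so Alice can compute the $\Qstat$ responses for the learner's queries, send them to Bob with $\log(1/\tau)$ bits each, and have Bob run the learner and output $h(x)$. Your write-up is actually slightly more careful than the paper's on two small points that the paper glosses over---you use shared randomness to keep Alice's and Bob's simulations of an adaptive learner in sync, and you explain the $\log(1/\tau)$ cost per message by rounding $\bra{\psi_c}M_i\ket{\psi_c}$ to the $\tau$-grid---but the underlying argument is the same.
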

	\begin{proof}
Let $\Qsq_{1/2+\gamma}(\Cc)=d$. For every $c\in \Cc$ and distribution $D:\01^n\rightarrow [0,1]$, there exists $\{(M_i,\tau)\}_{i\in [d]}$ and a $\Qsq$ learning algorithm $\A$ such that: given  $\alpha_1,\ldots,\alpha_d$ satisfying
	\begin{align}
	\label{eq:defnofalphainCC}
		\Big\vert \alpha_i-\langle \psi_c \vert M_i \vert \psi_c \rangle  \Big\vert \leq  \tau \qquad \text{ for every } i\in [d],
	\end{align}
	where $\ket{\psi_c}=\sum_x\sqrt{D(x)}\ket{x,c(x)}$, $\A$ can  output a hypothesis $h:\01^n\rightarrow \01$ satisfying 
	$\Pr_{x\sim D} [h(x)= c(x)]\geq 1/2+\gamma$.

    Consider the product distribution $\mu=\mu_1\times \mu_2$ where $\mu_1:\Cc\rightarrow [0,1]$	 and $\mu_2: \01^n\rightarrow [0,1]$. Suppose Alice receives $c\in \Cc$ according the distribution $\mu_1$ and Bob obtains $x\in \01^n$ from distribution $\mu_2$. Bob now runs the quantum statistical query protocol for the distribution $D=\mu_2$. In order to run the protocol, Alice sends $\alpha_1,\ldots,\alpha_d$ to Bob where $\alpha$s are defined in Eq.~\eqref{eq:defnofalphainCC} for the state $\ket{\psi_c}=\sum_{x}\sqrt{\mu_2(x)}\ket{x,c(x)}$ (note that the distribution $\mu$ is known both to Alice and Bob explicitly). The total cost of sending $\alpha_i$ is at most $\log(1/\tau)$.  Bob uses these $\alpha$s and obtains a hypothesis $h$ that satisfies $\Pr_{x\sim \mu_2}[h(x)= c(x)]\geq 1/2+\gamma$. Bob then outputs $h(x)$. By the promise of the $\Qsq$ algorithm, for every $c\in \Cc$, we have
$$
 \Pr_{x\sim \mu_2}[h(x)=c(x)]\geq \frac{1}{2}+\gamma.
$$
 In particular, this implies
$
\Pr_{(c,x)\sim \mu} [h(x)=c(x)]\geq \frac{1}{2}+\gamma.
$
Hence,  we have $\Rc^{\rightarrow,\times}_{1/2+\gamma}(\Cc)\leq d\log (1/\tau)$, thereby proving the lemma statement.
	\end{proof}

	Similar to $\Rc^{\rightarrow,\times}_{1/2+\gamma}(\Cc)$, we can define $\Qc^{\rightarrow,\times}_{1/2+\gamma}(\Cc)$ as the \emph{quantum communication complexity} of computing $\Cc$ under product distributions, wherein Alice is allowed to send \emph{quantum bits} to Bob. We observe that $\WSQDIM(\Cc)$ can be used to lower bound  $\Qc^{\rightarrow,\times}_{1/2+\gamma}(\Cc)$.
\begin{lemma}
	\label{lem:weakQccandSQ}
		Let $\Cc\subseteq \{c:\01^n\rightarrow \01\}$. For every $\gamma>0$, we have
		\begin{align}
		    \label{eq:our-lb-cc}
		\Omega(\log (\gamma\cdot \sqrt{\WSQDIM(\Cc)}))\leq \Qc^{\rightarrow,\times}_{1/2+\gamma}(\Cc).		    
		\end{align}
	\end{lemma}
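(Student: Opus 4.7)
The plan is to exhibit a single product distribution $\mu$ on $\Cc\times\01^n$ on which any $k$-qubit one-way quantum protocol that succeeds with probability $\geq 1/2+\gamma$ already forces $k\geq\Omega(\log(\gamma\sqrt{d}))$, where $d=\WSQDIM(\Cc)$. By definition of $\WSQDIM$, there are $d$ concepts $c_1,\ldots,c_d\in\Cc$ and a distribution $D$ on $\01^n$ such that, viewing each $c_i$ as $\pmset{}$-valued, $|\Exp_{x\sim D}[c_i(x)c_j(x)]|\leq 1/d$ for $i\neq j$. Take $\mu=\mu_1\times D$ with $\mu_1$ uniform over $\{c_1,\ldots,c_d\}$. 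A $k$-qubit protocol corresponds to states $\rho_i$ on $\C^{2^k}$ (Alice's message for $c_i$) together with, on Bob's side, a $\pmset{}$-valued observable $M_x$ with $\|M_x\|\leq 1$ for each $x$. Converting ``success probability $\geq 1/2+\gamma$'' into an expectation, the success condition averaged over $\mu$ reads
$$
\frac{1}{d}\sum_{i=1}^d \Exp_{x\sim D}\bigl[c_i(x)\cdot \Tr(\rho_i M_x)\bigr] \ \geq\ 2\gamma.
$$

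The crucial observation is a dimension bound. Define $g_i(x):=\Tr(\rho_i M_x)$ as a function of $x$; every $g_i$ lies in the real subspace $V\subseteq L^2(\01^n,D)$ consisting of functions $x\mapsto \Tr(H M_x)$ as $H$ ranges over Hermitian matrices on $\C^{2^k}$. Since the space of $2^k\times 2^k$ Hermitian matrices has real dimension $4^k$, we have $\dim V\leq 4^k$, independently of how complicated the measurement family $\{M_x\}_x$ sits inside $L^2(D)$. Let $P$ be the orthogonal projection onto $V$ in $L^2(D)$. Because $|g_i(x)|\leq 1$ gives $\|g_i\|_{L^2(D)}\leq 1$, Cauchy-Schwarz yields $\langle c_i,g_i\rangle_D = \langle Pc_i,g_i\rangle_D \leq \|Pc_i\|_{L^2(D)}$; averaging over $i$ and applying a second Cauchy-Schwarz gives
$$
\sum_{i=1}^d \|Pc_i\|^2_{L^2(D)} \ \geq\ 4\gamma^2 d.
$$

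On the other hand, $\sum_i\|Pc_i\|^2_{L^2(D)}=\Tr(PX)$ where $X=\sum_i c_i c_i^\top$ (in $L^2(D)$) has nonzero spectrum equal to that of the Gram matrix $G_{ij}=\langle c_i,c_j\rangle_D$. Since $G_{ii}=1$ and $|G_{ij}|\leq 1/d$ off-diagonal, Gershgorin forces $\|X\|_{\mathrm{op}}\leq 2$, and hence $\Tr(PX)\leq \mathrm{rank}(P)\cdot\|X\|_{\mathrm{op}}\leq 2\cdot 4^k$. Chaining the two bounds yields $2\cdot 4^k \geq 4\gamma^2 d$, i.e., $k\geq \tfrac{1}{2}\log(2\gamma^2 d)=\Omega(\log(\gamma\sqrt{d}))$, which is the stated lower bound.

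The conceptual heart of the proof, and the step that will need the most care to verify, is the dimension bound $\dim V\leq 4^k$: this is what converts the quantum communication budget into a rigid geometric constraint on Bob's achievable correlations with each $c_i$. The remaining pieces (two Cauchy-Schwarz applications plus a Gershgorin bound on the Gram matrix) are routine. Shared public randomness between Alice and Bob, if present, does not affect the bound, since one can condition on the randomness and run the argument on each deterministic protocol in the support; no pre-shared entanglement is assumed in this model, so the Hermitian matrices in the definition of $V$ really do live on the $k$-qubit message space.
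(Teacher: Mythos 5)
Your proof is correct, and it takes a genuinely different route from the paper. The paper obtains the lemma by composing two black-box results: Sherstov's inequality $\sqrt{\WSQDIM(\Cc)/2}\leq 1/\disc^{\times}(F)$ and Klauck's discrepancy lower bound $\Qc^{\mu}_{1/2+\gamma}(F)\geq\log_2(\gamma/\disc_{\mu}(F))$. You instead give a self-contained argument that bypasses discrepancy entirely: you observe that a $k$-qubit one-way protocol confines all of Bob's expected-outcome functions $g_i(x)=\Tr(\rho_i M_x)$ to a subspace $V\subseteq L^2(D)$ of dimension at most $4^k$, and then show (via two Cauchy--Schwarz steps plus Gershgorin on the Gram matrix of near-orthogonal concepts) that $d$ nearly uncorrelated concepts cannot all have large projection onto a $4^k$-dimensional subspace unless $4^k\geq 2\gamma^2 d$. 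All intermediate steps check out: the conversion of success probability to correlation, the identity $\sum_i\|Pc_i\|^2=\Tr(PX)$, the Gershgorin bound $\|X\|_{\mathrm{op}}\leq 2$, and the trace inequality $\Tr(PX)\leq\operatorname{rank}(P)\|X\|_{\mathrm{op}}$. Your treatment of shared randomness by conditioning is also sound, since the correlation inequality you need is linear in the protocol and survives averaging.

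One point worth flagging: your argument leans on the message space being exactly $\C^{2^k}$, i.e., no pre-shared entanglement, which you correctly call out. The paper's route via Klauck's discrepancy bound is more robust in this respect, since that bound applies to entanglement-assisted protocols as well; your dimension count would need adjustment (or a different argument, e.g., via the factorization/trace norm) to cover that case. Given that the paper's definition of $\Qc^{\rightarrow,\times}$ only posits shared classical random bits, your assumption is consistent with the model, so there is no gap. What your approach buys is transparency and self-containment; what the paper's buys is modularity and automatic coverage of the entangled setting.
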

	\begin{proof}
The main technical tool in the proof is a combinatorial quantity
called \emph{discrepancy}. which we do not define here, but we use its connections to $\WSQDIM$ and communication complexity.
 \citewithname{Sherstov}{\cite{sherstov:halfspacematrices}} showed that for any $\Cc$ and $F_{\Cc} : \Cc \times \01^n \rightarrow \01$ such that $F(c,x) = c(x)$, we have~that
$$
\sqrt{\frac{1}{2}\WSQDIM(\Cc_F)}\leq \frac{1}{\disc^{\times}(F)}\leq 8\WSQDIM(\Cc_F)^2.
$$
Our lemma statement follows from the result of \cite{klauck:CC}, who showed that 
for any function $F:X\times Y\rightarrow [0,1]$, distribution $\mu$ and $\gamma>0$, we have
$\Qc^{\mu}_{1/2+\gamma}(F)\geq\log_2 \Big(\frac{\gamma}{\disc_{\mu}(F)}\Big)$.
	\end{proof}

Prior to our work, \citewithname{Kremer {\em et al.}}{\cite{kremer:CC}} and \citewithname{Ambainis {\em et al}}{\cite{ambainis:superdense}} related the $\VC$ dimension and communication complexity by showing that for every concept class $\Cc$\, we have
\begin{align} 
\label{eq:kremer-ambainis}
(1-\textsf{H}_2(\varepsilon))\cdot \VC(\Cc)\leq \Rc^{\rightarrow,\times}_{\varepsilon}(\Cc)\leq \VC(\Cc),
\end{align}
where $H_2(\cdot)$ is the binary entropy function. In particular for constant $\varepsilon$, they showed the characterization $ \Rc^{\rightarrow,\times}_{\varepsilon}(\Cc)=\Theta(\VC(\Cc))$. 
 A priori it might seem that the lower bound of $\Omega(\log \WSQDIM(\Cc))$ in Lemma~\ref{lem:weakQccandSQ} is exponentially worse than the upper bound that we get in Lemma~\ref{lem:ub-cc-qsq} and might not be useful in comparison to the lower bound in  Eq.~\eqref{eq:kremer-ambainis}. However note that for every $\Cc$, the best lower bound that Eq.~\eqref{eq:kremer-ambainis} can yield is  $\VC(\Cc)\leq \log|\Cc|$, and one can obtain a similar lower bound using our Lemma~\ref{lem:weakQccandSQ} since $\WSQDIM(\Cc)$ could be as large as $|\Cc|$. In fact we show that in the small-error regime, our lower bound {can be} \emph{exponentially} better than what can get from Eq.~\eqref{eq:kremer-ambainis}.
 
 Suppose $\varepsilon=1/2+\gamma$ for some $\gamma\ll 1/2$ in Eq.~\eqref{eq:kremer-ambainis}. The lower bound scales then as
    $$
    (1-\textsf{H}_2(\varepsilon))\cdot \VC(\Cc)=\Big(1-\textsf{H}_2\big(\frac{1}{2}+\gamma\big)\Big)\cdot \VC(\Cc)=\Theta(\gamma^2\cdot \VC(\Cc)),
    $$
    where we used the Taylor series expansion of $\textsf{H}_2(\cdot)$ to conclude $\textsf{H}_2(1/2+\gamma)=\Theta(\gamma^2)$ for  $\gamma\ll~1/2$. Let $\Cc=\PARITY_n$ and  $\gamma=\WSQDIM(\Cc)^{-1/3} = 2^{-n/3}$, then Eq.~\eqref{eq:kremer-ambainis} gives us the~trivial
$$
\Rc^{\rightarrow,\times}_{\frac{1}{2}+\gamma}(\Cc)\geq \frac{\VC(\Cc)}{\WSQDIM(\Cc)^3}=\Omega(n\cdot 2^{-2n/3}),
$$ 
however Eq.~\eqref{eq:our-lb-cc} gives us a stronger bound of  $\Qc^{\rightarrow,\times}_{\frac{1}{2}+\gamma}(\Cc)\geq \Omega\left(\log\left(\WSQDIM(\Cc)^{\frac{1}{6}}\right)\right) = \Omega(n)$. Notice that this allows us to give non-trivial lower bounds on the communication complexity even for {\em inverse exponential bias}.
	\section{Quantum learning in a differential private setting}\label{sec:introdp}

In this section we describe the connections between the $\Qsq$ model and private learning. We start with a brief overview of classical differential privacy.

\subsection{Differential privacy}
\label{sec:dp}
Differential privacy is an important framework that provides a mathematical model for the  notion of privacy of individuals on database queries~\cite{dwork:intro,dwork:intro1,dwork:intro3,blum:intro2}. More concretely,  an algorithm $\A$ is said to be $\alpha$-differentially private if for any two neighbor databases\footnote{We can see a database $X$ as a string  in $\Sigma^n$, for some alphabet $\Sigma$.}  $X$ and $X'$, where two databases are neighbors if they differ in a single position, and for every subset $\mathcal{F}$ of the possible outcomes of $\A$ we have  
$$
\Pr\big[\A(X)\in \mathcal{F} \big]\leq e^{\alpha}\Pr\big[\A(X')\in \mathcal{F} \big].
$$

Given the success of differential privacy (in theory and practice), this notion was extended also to learning algorithms by Kasiviswanathan {\em et al.}~\cite{klnrs}. In this setting, we extend the requirements of standard PAC learning to require the learning algorithm to be differentially-private. Classically, it is well known that that if a concept class can be learned in statistical query model, it can be private PAC learnable and this connection has provided many consequences {(which we do not discuss here,  and refer the interested reader to~\cite{klnrs,beimel:deterministic,vadhan:survey} for more on differential privacy and its applications)}.

	\subsubsection{Laplacian mechanism}\label{sec:laplace}
The Laplacian mechanism is a technique used often to ensure that the output of a classical algorithm is differentially-private. The mechanism works as follows: suppose we want to compute function $f:[0,1]^T\rightarrow [0,1]$ whose input variables have small influences, then the Laplacian mechanism first computes $f$ on an  input $(x_1,\ldots,x_T)$, then adds noise from the Laplace distribution to $f(x_1,\ldots,x_T)$ and outputs the resulting value. 
	
	\begin{definition}[Laplacian mechanism]
	\label{def:laplacenoise}
	Let $T\geq 1$, $f:[0,1]^T\rightarrow [0,1]$ and $a_1,\ldots,a_T \in [0,1]$. The \emph{Laplacian mechanism} for computing $f$, first computes ${a'}=f(a_1,\ldots,a_T)$
	and outputs ${a'}+x$ where $x$ is drawn from the Laplacian distribution $D:\R\rightarrow [0,1]$ with parameter $\alpha n$ defined as
	$$
	D_{\alpha\cdot n}(x)=\frac{\alpha n}{2}e^{-|x| \cdot \alpha n},
	$$
	where $|x|$ is the absolute value of $x$.
	\end{definition}
	
	The output of the Laplacian mechanism can be shown to compute $f$ in a differentially private~manner. In particular,  
	it is well-known that it can be used to compute the average of numbers (i.e., given $a_1,\ldots,a_T\in \R$, compute ${a'}  =f(a_1,\ldots,a_T)=\frac{1}{T}\sum_{i=1}^T a_i$) privately.
	
	\begin{lemma}
	\label{lem:avgisDP}
	The Laplacian mechanism for computing the average of $T$ numbers  $\{a_1,\ldots,a_T\}$ with Laplacian parameter $\alpha \cdot T$ is $\alpha$-differentially private. Moreover, there exists some universal constant $C>0$ such that with probability at least $1-\delta$ the output $v$ of the Laplacian mechanism satisfies:
	$$
	\Big|v-\frac{1}{T}\sum_{i=1}^T a_i  \Big|\leq C\cdot  \frac{1}{\alpha T} \log \Big(\frac{1}{\delta}\Big).
	$$
	\end{lemma}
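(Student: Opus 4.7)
The proof splits naturally into the two claims in the lemma, each of which follows from a textbook Laplace-mechanism argument; my plan is to carry them out as follows.

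For the privacy claim, the first step is to bound the $\ell_1$-sensitivity of the averaging function $f(a_1,\ldots,a_T) = \frac{1}{T}\sum_{i=1}^T a_i$. Because every $a_i \in [0,1]$, replacing a single entry changes $f$ by at most $1/T$, so $\Delta f \leq 1/T$. Next I would apply the standard ratio bound for the Laplace density used in Definition~\ref{def:laplacenoise}: for any pair of neighbouring inputs $a, a'$, and any point $y$ in the output space,
$$
\frac{D_{\alpha T}(y - f(a))}{D_{\alpha T}(y - f(a'))} = \exp\!\bigl(\alpha T \cdot (|y-f(a')| - |y-f(a)|)\bigr) \leq \exp(\alpha T \cdot |f(a)-f(a')|) \leq e^{\alpha},
$$
using the reverse triangle inequality and the sensitivity bound. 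Integrating this pointwise ratio over any measurable $\mathcal{F}$ yields $\Pr[\A(a)\in\mathcal{F}] \leq e^\alpha \Pr[\A(a')\in\mathcal{F}]$, which is exactly $\alpha$-differential privacy.

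For the accuracy claim, the output $v$ differs from the true average only by the magnitude of the drawn Laplace noise $X$, so it suffices to bound $\Pr[|X| > t]$. A direct integration of the density $D_{\alpha T}(x) = \tfrac{\alpha T}{2}e^{-|x|\alpha T}$ gives $\Pr[|X|>t] = e^{-\alpha T \cdot t}$. Setting this equal to $\delta$ and solving for $t$ yields $t = \tfrac{1}{\alpha T}\ln(1/\delta)$, so with probability at least $1-\delta$ we have $|v - \tfrac{1}{T}\sum_i a_i| \leq \tfrac{C}{\alpha T}\log(1/\delta)$ for a suitable universal constant $C$ (absorbing the conversion between $\ln$ and $\log$).

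There is no real obstacle here beyond book-keeping: both parts are entirely standard. The only thing worth checking carefully is the parametrization — the paper's density $D_{\alpha n}(x) = \tfrac{\alpha n}{2} e^{-|x|\alpha n}$ corresponds to a Laplace scale of $1/(\alpha n)$, and the lemma instantiates this with $n$ replaced by $T$, which is precisely what the sensitivity bound $\Delta f \leq 1/T$ demands in order to yield exactly $\alpha$-differential privacy.
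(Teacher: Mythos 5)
The paper does not give its own proof of this lemma; it simply cites Dwork and Roth~\cite[Section~3.3]{dwork:book}. Your argument is exactly the standard textbook proof that reference contains: bound the sensitivity of the empirical average by $1/T$, bound the pointwise Laplace density ratio via the reverse triangle inequality and integrate to get $\alpha$-DP, and integrate the tail of the Laplace density to get $\Pr[|X|>t]=e^{-\alpha T t}$ and hence the accuracy bound with $t=\tfrac{1}{\alpha T}\ln(1/\delta)$. It is correct, and you were also right to flag the $n$-versus-$T$ notational slip in the paper's Definition~\ref{def:laplacenoise}: the mechanism is meant to be instantiated with parameter $\alpha T$, which is precisely what makes the sensitivity cancel cleanly.
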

For a proof of this lemma and additionally applications of the Laplacian mechanism in differential privacy, we refer the reader to~\cite[Section~3.3]{dwork:book}.

\subsection{Private quantum PAC learning}
 Given the success of differential privacy, its  quantum analogue was recently proposed by \citewithname{Aaronson and Rothblum}{\cite{aaronson:gentle}}, which we define now.

\vspace{-0.2em}
    \begin{definition}
	Two product states  $\ket{\Phi}=\ket{\phi_1}\otimes \cdots \otimes \ket{\phi_n}$. 
	and $\ket{\Psi} = \ket{\psi_1}\otimes \cdots \otimes\ket{\psi_n}$ are neighbors \emph{if} there exists at most one $i\in [n]$ such that $\ket{\phi_i}\neq \ket{\psi_i}$.	A quantum algorithm $\A$ is $\alpha$-differential private on some subset $S$ of product states if for all states $\ket{\Phi},\ket{\Psi} \in S$ that are neighbors and every subset $\mathcal{F}$ of the possible outputs of $\A$ we have that\footnote{Following \cite{beimel:randomized}, we define the stronger notion of privacy where the probabilities are close not only for every possible output, but also for every {\em subset} of outputs.} 
	\[\Pr[\A(\ket{\Psi}) \in \mathcal{F} ] \leq e^{\alpha} \Pr[\A(\ket{\Phi}) \in \mathcal{F} ]. \]
	\end{definition}
	Inspired by this definition, we now define private learning a concept class.
	
\vspace{-0.3em}	
	\begin{definition}
		Let $\Cc$ be a concept class. We say a $\A$ is a \emph{ $(\alpha,\varepsilon,\delta)$-differentially private  quantum PAC} learning algorithm for $\Cc$ with sample complexity $T$ if $i)$ $\mathcal{A}$ is $\alpha$-differentially private and $ii)$
for every distribution $D:\01^n\rightarrow [0,1]$, $\A$ uses $T$ copies of $\ket{\psi_c}=\sum_x\sqrt{D(x)}\ket{x,c(x)}$, and with probability at least $1-\delta$ outputs $h$ such that 
			$
			\Pr_{x\sim D} [h(x)\neq c(x)]\leq \varepsilon.
			$
	\end{definition}	

In classical literature it is well-known that if a concept class is learnable in the $\Sq$ model, then it can also be learned privately in the PAC learning model. We now show that this implication also holds true in the quantum case. The proof follows similarly to \Cref{claim:kearnssqimpliesnoisy}: we use
	$O(\tau^{-2})$ quantum examples to simulate a $\Qstat$ oracle with tolerance $\tau$ and then, to ensure privacy the of each query, we use the well-known Laplacian mechanism (see \Cref{sec:laplace}) in the simulation of $\Qstat$ by quantum examples.
\begin{theorem}\label{thm:qsq-implies-qdp}
    Let $\Cc \subseteq \{c:\01^n\rightarrow \01\}$. If there exists a learning algorithm that $\eps$-learns $\Cc$ using $d$ $\Qstat$ queries with tolerance at least $\tau$, then the quantum sample complexity of $(\alpha,\varepsilon,\delta)$-private quantum PAC learning $\Cc$ is $O\Big( \Big(\frac{d}{\tau^2}+\frac{d}{\varepsilon \tau}\Big)\cdot \log\Big(\frac{d}{\beta}\Big)\Big)$.
\end{theorem}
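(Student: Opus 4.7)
The plan is to mimic the simulation argument used in \Cref{claim:kearnssqimpliesnoisy}: for each of the $d$ queries made by the $\Qsq$ learner, draw a fresh batch of $T$ quantum examples, measure them with the requested observable $M_i$, and forward the empirical average to the learner. To make the overall simulation $\alpha$-differentially private, I would additionally pass each empirical average through the classical Laplacian mechanism of \Cref{lem:avgisDP} before returning it. The key conceptual point is that since the batches are \emph{disjoint}, I do not have to split the privacy budget across queries: changing a single input quantum example (in the sense of the neighbor relation) affects only the one batch that contains it, so by parallel composition it suffices to make each individual Laplacian-privatized estimate $\alpha$-DP for the whole algorithm to be $\alpha$-DP. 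This is what avoids paying an extra factor of $d$ in the sample complexity.

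Concretely, for each query $i \in [d]$, I would take $T$ fresh copies of $\ket{\psi_c}$, measure each according to $M_i$ to obtain outcomes $a_1^i,\dots,a_T^i \in [-1,1]$, compute $\hat\beta_i = \tfrac{1}{T}\sum_j a_j^i$, and then output $\tilde\beta_i = \hat\beta_i + \eta_i$ where $\eta_i$ is drawn from the Laplace distribution with parameter $\alpha T$. By \Cref{lem:avgisDP} (applied to the affine rescaling of the average into $[0,1]$), the map from the batch to $\tilde\beta_i$ is $\alpha$-differentially private, and parallel composition gives $\alpha$-DP for the full simulation. Correctness then follows from a triangle inequality
\[
\bigl|\tilde\beta_i - \langle \psi_c| M_i|\psi_c\rangle\bigr| \;\leq\; |\eta_i| \;+\; \bigl|\hat\beta_i - \langle \psi_c| M_i|\psi_c\rangle\bigr|,
\]
where the second term is bounded by $\tau/2$ with probability $\geq 1-\delta/(2d)$ via a Chernoff bound as in \Cref{claim:kearnssqimpliesnoisy}, provided $T = \Omega(\log(d/\delta)/\tau^2)$, and the first term is bounded by $\tau/2$ with probability $\geq 1-\delta/(2d)$ via \Cref{lem:avgisDP} provided $T = \Omega(\log(d/\delta)/(\alpha\tau))$.

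Taking a union bound over the $d$ queries, with probability at least $1-\delta$ every $\tilde\beta_i$ is a valid $\tau$-accurate response to the $i$-th $\Qstat$ query, so the $\Qsq$ learner is fooled into behaving as in the noise-free model and outputs a hypothesis $h$ with $\Pr_{x\sim D}[h(x)\neq c(x)]\leq \varepsilon$. Choosing $T = O\bigl((1/\tau^2 + 1/(\alpha\tau))\log(d/\delta)\bigr)$ and summing over the $d$ disjoint batches gives the claimed total sample complexity $O\bigl((d/\tau^2 + d/(\alpha\tau))\log(d/\delta)\bigr)$.

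The main thing to be careful about is the applicability of parallel composition: it requires that each batch of $T$ quantum examples is used for exactly one Laplacian-privatized output and is not reused across queries, and that the postprocessing the learner does on the $\tilde\beta_i$'s is purely classical (which is immediate since the $\Qsq$ learner is classical). A subsidiary subtlety is that measurement of a quantum example yields a classical value in $[-1,1]$ whose expectation equals $\langle\psi_c|M_i|\psi_c\rangle$ because $\|M_i\|\leq 1$, so the classical Laplacian mechanism of \Cref{lem:avgisDP} applies off the shelf to the empirical average of these outcomes without any quantum-specific adjustment.
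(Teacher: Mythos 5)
Your proposal matches the paper's proof essentially step for step: disjoint batches of $T$ fresh quantum examples per query, classical measurement with $M_i$, empirical average privatized by the Laplacian mechanism of \Cref{lem:avgisDP}, parallel composition across the $d$ disjoint batches for $\alpha$-DP, and Chernoff plus Laplace tail bounds with a union bound for correctness. If anything, your choice $T = O\big((\tau^{-2} + (\alpha\tau)^{-1})\log(d/\delta)\big)$ is the cleaner form (the paper's $Q$ carries an extraneous $\alpha^{-1}$ on the Chernoff term), and your remark about rescaling $[-1,1]$-valued outcomes to fit the $[0,1]$ hypothesis of \Cref{lem:avgisDP} is a small point the paper glosses over.
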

\begin{proof}
 The proof here is similar to the proof of Theorem~\ref{claim:kearnssqimpliesnoisy} where we showed quantum $\Sq$ learnability implies quantum PAC learnability. Suppose $\Qsq(\Cc)=d$. For every $c\in \Cc$ and distribution $D:\01^n\rightarrow [0,1]$: suppose the $\Qsq$ learner $\A$  makes the queries $\{(M_i,\tau)\}_{i\in [d]}$ and obtains  $\alpha_1,\ldots,\alpha_d$ satisfying
	$$
	\Big\vert \alpha_i-\langle \psi_c \vert M_i \vert \psi_c \rangle  \Big\vert \leq  \tau \qquad \text{ for every } i\in [d],
	$$
	for $\ket{\psi_c}=\sum_x\sqrt{D(x)}\ket{x,c(x)}$, then $\A$ outputs a hypothesis $h$ satisfying 
	$\Pr_{x\sim D} [h(x)\neq c(x)]\leq \eta$.
Let 
$$
Q=C \alpha^{-1} \cdot \Big(\frac{1}{\tau^2}+\frac{2}{\tau}\Big)\cdot \log \Big(\frac{2d}{\delta}\Big)
$$
where $C$ is the constant defined in \Cref{lem:avgisDP}.
Consider a quantum PAC learner that for every $i\in~[d]$, the  learner  obtains $Q$
many (fresh) quantum examples $\ket{\psi_c}$, which are measured according to the observable $M_i$ with outcomes $a^i_1,\ldots,a^i_Q$. The learner then applies the Laplacian mechanism $\mathsf{LM}$ to compute the average of $a^i_1,\ldots,a^i_{Q}$: first compute ${b^i}=\sum_{j=1}^Q a^i_j$ and then apply to ${b^i}$ the Laplacian noise   with parameter $\alpha\cdot Q$, resulting in $\tilde{b}^i$ (see Definition~\ref{def:laplacenoise}). The quantum PAC learner feeds the $\Qsq$ learner with $\{\widetilde{b}^1,\ldots,\widetilde{b}^d\}$, and outputs the hypothesis $h$ provided by the $\Qsq$ learner. The sample complexity of this PAC learner is $O\left(d \alpha^{-1} \cdot \Big(\frac{1}{\tau^2}+\frac{2}{ \tau}\Big)\cdot \log\Big(\frac{d}{\delta}\Big)\right)$.

We first analyze the correctness of our quantum PAC learner. Similar to the proof of Theorem~\ref{claim:kearnssqimpliesnoisy}, observe that $Q$ is large enough to ensure that, with probability at least $1-\delta/(2d)$, we have $|b^i-\langle \psi_c \vert M_i \vert \psi_c \rangle|\leq \tau/2$ for every $i$. Next, by Lemma~\ref{lem:avgisDP}, with probability at least $1-\delta/(2d)$, we have
$$
\Big|\tilde{b}^i-{b^i}\Big|\leq C \cdot \frac{1}{\alpha Q} \cdot \log \Big(\frac{2d}{\delta}\Big) \leq \frac{\tau}{2},
$$
where the last inequality used the definition of $Q$. The difference between the quantum $\Sq$ query response and $\tilde{b}^i$ can be bounded using the triangle inequality by 
$$
|\langle \psi_c \vert M_i \vert \psi_c \rangle-\tilde{b}^i|\leq |\langle \psi_c \vert M_i \vert \psi_c \rangle-b^i|+|b^i-\tilde{b}^i|\leq \tau.
$$
Moreover, by a union bound we have: with probability at least $1-\delta$, the quantum PAC learner answers all $d$  $\Qsq$ queries with error at most $\tau$. Hence, with probability $\geq 1-\delta$, the output~$h$ of the quantum $\Qsq$ learner (and hence the quantum PAC learner) satisfies $\Pr_{x\sim D} [h(x)\neq c(x)]\leq \eps$.

    We now analyze the privacy of our quantum PAC learner. For that, let us analyze the privacy for computing $\tilde{b}^i$ for some fixed $i$. Let $\mathcal{Q}$ be the procedure that computes $\tilde{b}^i$ from $\ket{\psi_c}^{\otimes Q}$. It follows~that
    \begin{align*}
    \Pr [\mathcal{Q} \big(\ket{\psi_c}^{\otimes Q}\big) = y]&=\Pr_{a_1^i,\ldots,a_Q^i} [\mathsf{LM } \big(\{a_1^i,\ldots,a_Q^i\}\big) = y]\\
    &\leq e^{\alpha}\cdot \Pr_{a_1^i,\ldots,a_{Q}^i}  [\mathsf{LM } \big(\{a_1^i,\ldots,a_{Q-1}^i,w\}\big) = y]= e^{\alpha} \cdot \Pr [\mathcal{Q} \big(\ket{\psi_c}^{\otimes Q-1}\otimes \ket{\phi}\big) = y],
\end{align*} 
where we use that the Laplacian mechanism with our parameters is $\alpha$-differential private and we assume for simplicity that the (possibly) different entry is the last state in the tensor product.    

Notice that the quantum PAC learning algorithm $\A$ receives as input  $\ket{\psi_c}^{\otimes Qd}$, and runs $\mathcal{Q}$ $d$-times in parallel and runs a procedure $\mathcal{S}$ that computes the hypothesis basis on the classical statistics.
Let assume again for simplicity that the neighbor $\ket{\Phi}$ of $\ket{\psi_c}^{\otimes Qd}$ has its different entry in the last position. 
In this case, for any subset of outputs $\mathcal{F}$, we have that 
    \begin{align*}
 &\Pr [\mathcal{A} \big(\ket{\psi_c}^{\otimes Qd}\big) \in \mathcal{F}]\\
    &=  \Pr[\mathcal{S}(y_1,...,y_d)  \in \mathcal{F}]
    \Pr [\mathcal{Q} \big(\ket{\psi_c}^{\otimes Q} \big) = y_1] \cdots
        \Pr [\mathcal{Q} \big(\ket{\psi_c}^{\otimes Q} \big) = y_d] \\
  &\leq e^{\alpha}
\Pr[\mathcal{S}(y_1,...,y_d)  \in \mathcal{F}]
    \Pr [\mathcal{Q} \big(\ket{\psi_c}^{\otimes Q} \big) = y_1] \cdots
        \Pr [\mathcal{Q} \big(\ket{\psi_c}^{\otimes Q-1}\otimes \ket{\phi} \big) = y_d] \\
  &=     \Pr [\mathcal{A} \big(\ket{\Phi} \big) \in \mathcal{F}],
\end{align*} 
showing that $\A$ is also $\alpha$-private. 
\end{proof}

An immediate corollary of this theorem along with the results in Section~\ref{sec:introlearneff} is the following  (which was not known before).
\begin{corollary}
 Parities, $k$-juntas and DNFs can be \emph{privately} quantum PAC learned under the uniform~distribution.
\end{corollary}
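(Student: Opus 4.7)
The proof will be a direct combination of \Cref{thm:qsq-implies-qdp} with the efficient $\Qsq$ learning algorithms established in \Cref{sec:introlearneff}. The plan is to, for each of the three concept classes, read off the query complexity $d$ and tolerance $\tau$ from the corresponding lemma, and then substitute these parameters into the sample-complexity bound $O\big((d/\tau^2 + d/(\eps \tau))\log(d/\beta)\big)$ given by \Cref{thm:qsq-implies-qdp}. Since each of the three lemmas gives $d$ and $1/\tau$ that are polynomially bounded in the relevant size parameters, the resulting private quantum PAC sample complexity is polynomial as well.

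Concretely, I would argue as follows. For the class of parities on $n$ bits, \Cref{lem:parities} supplies a $\Qsq$ learner making $d = O(n)$ queries at tolerance $\tau = 1/3$, so \Cref{thm:qsq-implies-qdp} yields an $(\alpha,\eps,\delta)$-private quantum PAC learner with sample complexity $\poly(n,1/\alpha,1/\eps,\log(1/\delta))$ under the uniform distribution. For $k$-juntas, \Cref{lem:juntas} gives $d = O(n + 2^{O(k)})$ and $\tau = \Omega(\eps \cdot 2^{-k/2})$, which plugged into \Cref{thm:qsq-implies-qdp} yields a private learner with sample complexity polynomial in $n$, $2^k$, $1/\eps$, $1/\alpha$ and $\log(1/\delta)$; in particular, for $k = O(\log n)$ this is $\poly(n,1/\eps,1/\alpha,\log(1/\delta))$. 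For $\poly(n)$-sized DNFs, \Cref{lem:learnDNF} provides $d = \poly(n,1/\eps)$ and $\tau = 1/\poly(n,1/\eps)$, so \Cref{thm:qsq-implies-qdp} produces a private quantum PAC learner with sample complexity $\poly(n,1/\eps,1/\alpha,\log(1/\delta))$ under the uniform distribution.

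There is essentially no obstacle here: the work has been done in establishing \Cref{thm:qsq-implies-qdp} and the three learning lemmas, and the corollary is a pure bookkeeping exercise verifying that the parameters remain polynomial after the reduction. The only thing worth double-checking is that the learners of \Cref{sec:introlearneff} are framed exactly in the form required by \Cref{thm:qsq-implies-qdp} (a $\Qsq$ learner with a fixed query count $d$ and fixed tolerance $\tau$), which is the case for all three lemmas as stated.
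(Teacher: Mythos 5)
Your proposal is correct and matches the paper's (implicit) reasoning exactly: the corollary is stated immediately after \Cref{thm:qsq-implies-qdp} as a direct consequence of that theorem combined with the $\Qsq$ learners of \Cref{lem:parities}, \Cref{lem:juntas}, and \Cref{lem:learnDNF}, with no further argument given. Your parameter bookkeeping ($d$ and $\tau$ for each class, substituted into the sample-complexity bound) is accurate and is precisely what the paper leaves to the reader.
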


\subsection{Representation dimensions and private quantum PAC learning}

It is well-known that the sample complexity of classical and quantum PAC learning is characterized by $\VC$ dimension~\cite{blumer:vc,hanneke:pac,arunachalam:optimalpaclearning}. Classically, in the private setting, a series of results \cite{beimel:randomized,beimel:deterministic,feldmanxiao:dpcc}  showed that the \emph{representational dimension} of the concept class~$\Cc$ ($\Prdim(\Cc)$) characterizes the sample complexity of \emph{private} PAC learning. Here, we show that $\Prdim(\Cc)$ also characterizes the sample complexity of private {\em quantum} PAC learning $\Cc$.

In order to define the representation dimension of a concept class $\Cc\subseteq \{c:\01^n\rightarrow \01\}$, we first define the \emph{probabilistic representation} of~$\Cc$  and its \emph{probabilistic representational~dimension}.
	
	\begin{definition}[Representation of concept classes]
	A hypothesis class $\Hi\subseteq \{h:\01^n\rightarrow \01\}$ is an \emph{$\varepsilon$-representation of $\Cc$} if for every $c\in \Cc$ and distribution $D:\01^n\rightarrow [0,1]$, there exists $h\in \Hi$ such that $\Pr_{x\sim D}[h(x)\neq c(x)]\leq \varepsilon$. 
		
		Similarly, let $P:[r]\rightarrow [0,1]$ be a distribution and  $\Ch=\{\Hi_1,\ldots,\Hi_r\}$ be a collection of hypothesis classes. We say $(P,\Ch)$ is an \emph{$(\varepsilon,\delta)$-probabilistic representation} of $\Cc$, if for every $c\in \Cc$ and distribution $D:\01^n\rightarrow [0,1]$, we have
		$$
		\Pr_{i\sim P} [\exists h\in \Hi_i \text{ s.t. } \Pr_{x\sim D}[h(x)\neq c(x)]\leq \varepsilon ]\geq 1-\delta.
		$$	
		Define $\size(\Ch)=\max\{\log |\Hi_i|: \Hi_i\in \Ch\}$
	\end{definition}

We are now ready to define the probabilistic representational dimension of a concept class.	
	
	\begin{definition}[Representational dimension~\cite{beimel:deterministic,beimel:randomized}]
		Let $\Cc\subseteq \{c:\01^n\rightarrow \01\}$ be a concept class.  The $(\varepsilon,\delta)$ probabilistic representational dimension of $\Cc$, $\Prdim(\Cc)$ is defined as
				$$
		\min \Big\{\size(\Ch): \text{ there exists } (P,\Ch)  \text{ that } (\varepsilon,\delta)-\text{probabilistically represents } \Cc \Big\},
		$$

		\end{definition}

We now show that for every concept class $\Cc$, the quantum sample complexity of \emph{private} PAC learning $\Cc$ is characterized by the representation dimension of a concept class.  Since $\Prdim(\Cc)$ is an upper-bound to the {\em classical} sample complexity of private PAC learning (which in its turn is an upper bound to the quantum sample complexity\footnote{In particular, this inequality holds because the following algorithm is a {\em private quantum learner}: suppose a quantum learner obtains $T$ quantum examples, measures each quantum example in the computational basis and then runs the classical private learning algorithm on the $T$ classical examples. This quantum algorithm satisfies the conditions of quantum differential privacy, because a neighboring quantum state that is provided to the quantum learner will result in neighboring classical examples and by assumption we know that the classical learner is differentially private.}), we only need to show that $\Prdim(\Cc)$ is also a {\em lower bound} on the quantum sample complexity of quantum private PAC learning.  Together with the corresponding classical characterization~\cite{beimel:deterministic,beimel:randomized} (which inspires our proof), our result implies that quantum and classical sample complexities of private PAC learning are equal, up to constant~factors.

	\begin{theorem}
 If there exists an $(\alpha,\varepsilon,\delta)$-quantum private PAC learner  for a concept class $\Cc$ with sample complexity $T$, then the $(\varepsilon,\beta)$-probabilistic dimension $\Prdim(\Cc)=O(T\alpha++\log \log 1/\beta)$.\footnote{One can further prune this bound to get the $\varepsilon$ dependence in the upper bound on $\Prdim(\Cc)$ by using ideas in~\cite[Lemma~3.16]{beimel:randomized}, we omit it here.}
	\end{theorem}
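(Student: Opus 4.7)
The plan is to construct a probabilistic representation of $\Cc$ by repeatedly running the private learner $\A$ on a fixed generic input that is independent of the target concept, and then invoking group differential privacy to argue that each such run already has non-negligible probability of producing a good hypothesis for any $c$.

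First, I would fix a neutral product state, say $\ket{0} := \ket{0^{n+1}}^{\otimes T}$, which serves as a ``generic'' input to $\A$. Because the quantum DP definition in the paper is stated for product states differing in a single factor, I can telescope through a sequence of $T$ intermediate product states that interpolate between $\ket{\psi_c}^{\otimes T}$ and $\ket{0}$ by swapping one factor at a time; applying single-factor $\alpha$-privacy at each step yields the group-privacy inequality
\[
\Pr[\A(\ket{\psi_c}^{\otimes T}) \in \mathcal{F}] \;\leq\; e^{T\alpha}\,\Pr[\A(\ket{0}) \in \mathcal{F}]
\]
for every measurable output set $\mathcal{F}$. Fixing a distribution $D$ and letting $\mathcal{F}_{c,D}$ be the set of hypotheses that $\varepsilon$-approximate $c$ under $D$, PAC correctness gives $\Pr[\A(\ket{\psi_c}^{\otimes T})\in\mathcal{F}_{c,D}]\geq 1-\delta$, and therefore
\[
\Pr[\A(\ket{0})\in\mathcal{F}_{c,D}] \;\geq\; (1-\delta)\,e^{-T\alpha}.
\]

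Next, I would amplify this lower bound into a probabilistic representation. Set $N = \bigl\lceil \tfrac{e^{T\alpha}}{1-\delta}\log(1/\beta)\bigr\rceil$, and index each element of $\Ch$ by a tuple $r=(r_1,\ldots,r_N)$ of independent random strings for $\A$; the corresponding hypothesis class is
\[
\Hi_r \;=\; \bigl\{\A(\ket{0};r_1),\ldots,\A(\ket{0};r_N)\bigr\}.
\]
Let $P$ be the uniform distribution over such tuples. For any fixed $c$ and $D$, the runs are independent, so the probability that none of the $N$ outputs lies in $\mathcal{F}_{c,D}$ is at most $\bigl(1-(1-\delta)e^{-T\alpha}\bigr)^N \leq e^{-N(1-\delta)e^{-T\alpha}} \leq \beta$. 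Hence $(P,\Ch)$ is an $(\varepsilon,\beta)$-probabilistic representation of $\Cc$, with $\size(\Ch) \leq \log N = O(T\alpha + \log\log(1/\beta))$, which is exactly the claimed bound on $\Prdim(\Cc)$.

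The main obstacle is verifying the quantum group-privacy step rigorously: the DP definition in the paper is stated only for pairs of product states differing in one factor, so one has to check that the telescoping argument is actually legitimate. This amounts to exhibiting a length-$(T{+}1)$ sequence of product states $\ket{0}=\ket{\Psi_0},\ket{\Psi_1},\ldots,\ket{\Psi_T}=\ket{\psi_c}^{\otimes T}$ in which consecutive states are neighbors, and iterating the inequality $\Pr[\A(\ket{\Psi_i})\in\mathcal{F}]\leq e^{\alpha}\Pr[\A(\ket{\Psi_{i-1}})\in\mathcal{F}]$; no subtlety arises beyond requiring that each intermediate state be a valid product state, which holds by construction. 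Everything else (PAC guarantee, independent-trials amplification, counting the size of $\Ch$) is a clean union/product bound.
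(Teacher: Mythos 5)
Your proof takes essentially the same approach as the paper's: telescope the single-neighbor $\alpha$-privacy guarantee through $T$ intermediate product states to get a group-privacy factor of $e^{-T\alpha}$, conclude that a fixed neutral input already produces an $\varepsilon$-good hypothesis with probability at least $(1-\delta)e^{-T\alpha}$, and then amplify by $N\approx e^{T\alpha}\log(1/\beta)$ independent runs to build the probabilistic representation, giving $\size(\Ch)=O(T\alpha+\log\log(1/\beta))$. The only real difference is cosmetic but in your favor: the paper's neutral input is the $D$-dependent state $\ket{\psi_{\mathbf{0}}}=\sum_x\sqrt{D(x)}\ket{x,0}$ and it then defines $\widetilde{\Fe}$ and $\Ch$ from that state, which as written leaves $(P,\Ch)$ depending on the distribution $D$ fixed at the start of the proof even though the definition of a probabilistic representation requires a single $(P,\Ch)$ valid for all $c$ and $D$; your choice of the fixed state $\ket{0^{n+1}}^{\otimes T}$, together with indexing $\Ch$ explicitly by $N$-tuples of random strings, makes the $D$-independence of the construction manifest.
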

	\begin{proof}
		Let $\A$ be a $(\alpha,\varepsilon,1/2)$-quantum private learning algorithm for $\Cc$ using a hypothesis class~$\Fe$, with sample complexity $T$. Fix $c\in \Cc$ and distribution $D$ and define $\Fe'\subseteq \Fe$  as $\Fe'=\{h\in \Fe: \Pr_{x\sim D}[c(x)\neq h(x)]\leq \varepsilon\}$.   By the ``$\delta$-learning promise" of $\A$, we know 
		\begin{align}
		    \label{probofinFe}
		\Pr \big[\A\big(\ket{\psi_c}^{\otimes T}\big)\in \Fe'\big]\geq 1-\delta,
				\end{align}
		where the probability is taken with respect to the randomness of $\A$.
		Let  $\ket{\psi_{\textbf{0}}}=\sum_{x}\sqrt{D(x)}\ket{x,0}$. The  $\alpha$-quantum differential privacy of $\A$ implies that 
	\begin{align*}
	\Pr[\A\big(\ket{\psi_{\textbf{0}}}^{\otimes T}\big)\in \Fe'] &\geq  e^{-\alpha} \cdot 	\Pr[\A\big(\ket{\psi_{\textbf{0}}}^{\otimes T-1} \otimes \ket{\psi_{c}}\big)\in \Fe']\\
	&\geq  e^{-2\alpha} 	\Pr[\A\big(\ket{\psi_{\textbf{0}}}^{\otimes T-2}\otimes \ket{\psi_{c}}^{\otimes 2}\big)\in \Fe']\geq \cdots  \geq  e^{-T\alpha} \cdot 	\Pr[\A\big(\ket{\psi_{c}}^{\otimes T}\big)\in \Fe']
	\end{align*}	
	which is at least $(1-\delta)e^{-T\alpha}$ using Eq.~\eqref{probofinFe}.  In particular, we have that $\Pr_{} \big[\A\big(\ket{\psi_{\textbf{0}}}^{\otimes T}\big)\notin \Fe'\big]\leq 1-(1-\delta)e^{-T\alpha}$. Suppose, we run $\A$ $K=\ln(1/\beta)\cdot e^{T\alpha}/(1-\delta)$ many times on input $\ket{\psi_{\textbf{0}}}^{\otimes T}$, and let~$\Hi$ be the set of the outcomes of $\A$ on each execution. The probability that $\Hi$ does not contain an $\varepsilon$-good hypothesis is at most 
		$$
		\Big(1-(1-\delta)\cdot e^{-T\alpha}\Big)^{K}\leq \exp(-K(1-\delta)e^{-T\alpha})\leq  \beta,
		$$
		 using $(1-x)^t\leq e^{-xt}$ in the first inequality and the definition of $K$ in the second inequality. Let $\widetilde{\Fe}\subseteq \Fe$ be the set of hypothesis that have a non-zero probability of being output when $\A$ is given the input~$\ket{\psi_{\textbf{0}}}^{\otimes T}$. Let also $\Ch=\big\{\Hi\subseteq \widetilde{\Fe}: |\Hi|\leq \ln(1/\beta)\cdot e^{T\alpha}/(1-\delta)\big\}$ and $P$ be the uniform distribution over all $\Hi\in \Ch$. Then $(P,\Ch)$ is an $(\varepsilon, \beta)$-probabilistic representation for the concept class $\Cc$ and it follows that 
		$$
		\Prdim(\Cc)\leq  \max_{\Hi\in \supp(\Ch)} \{\ln |\Hi|\}\leq O(T \alpha+\log \log 1/\beta),
		$$
		which proves the theorem statement.
	\end{proof}

	\section{Discussion and future work}


{An important open question is, does access to many copies of quantum examples  increase the power of the quantum learning? More concretely, we can rephrase this question as, can separate quantum PAC learning and quantum statistical query learning (even under the uniform distribution).} The classical analogue of this question can be answered using the concept class of parity functions (which can be PAC learned in classical polynomial time and requires exponential time in the SQ framework). However, quantumly, as far as we are aware, all concept classes that are learnable in the quantum PAC setting seem to be learnable in the quantum SQ setting. A positive answer to this question, would help shed light on the question if entanglement is necessary for  quantum learning{, as recently shown for quantum property testing~\cite{BCL20}.}

In this paper we considering the learnability of Boolean functions using quantum statistical queries. But one could also consider the leranability of \emph{quantum states} in the $\Qsq$ model: let $\Cc$ be a class of $n$-qubit quantum states $\rho$; a $\Qsq$ algorithm for $\Cc$ can specify a two-outcome POVM~$M$ and obtains an addictive approximation to $\Tr(M\rho)$ (for the unknown target state $\rho \in \Cc$). Using such statistical queries, can we learn $\Cc$ in the PAC setting? We remark that many algorithms for learning quantum states~\cite{aaronson:gentle,aaronson:qlearnability,anshu2020sample,chunglin2018:channel,aaronson2018online,rocchetto:stab} can be phrased in terms of the statistical query~model. It would be interesting to see if various results present in our paper also carry over to the setting of learning quantum states. 

Classically, it is well-known that \emph{many} algorithms used in practice can be implemented using simply a statistical oracle, for example expectation maximization, simulated annealing, gradient descent, support vector machine, markov chain monte carlo methods, principle component analysis, convex optimization (see~\cite{reyzin2020statistical,feldman2017statistical} for more applications and references regarding these connections). It would be interesting if we could also phrase the quantum algorithms for these problems as well in the \emph{quantum} statistical query framework.  If so, $\Qsq$ learning would provide a unified framework for understanding theoretical and practice quantum algorithms in machine learning.

We now raise two further open questions in our $\Qsq$ framework: In our definition of $\Qsq$, we have a  classical randomized learner and one could possibly consider a general definition of $\Qsq$ model wherein the algorithm can make \emph{quantum superposition queries} to the oracle, or ask the oracle to perform joint, entangling measurements on multiple copies of $\ket{\psi_{c^*}}$. Secondly, Bun and Zhandry~\cite{BunZ16} showed that classical PAC learning is strictly more powerful than its private version under cryptographic assumptions. We leave understanding if such a separation also works in a (post-) quantum scenario as an open~question.
\newcommand{\etalchar}[1]{$^{#1}$}

\end{document}